\def\β{\beta}
\def\γ{\gamma}\def\Γ{\Gamma}
\def\φ{\varphi}
\def\ψ{\psi}
\def\τ{\tau}\def\Τ{T}
\def\ρ{\rho}
\def\λ{\lambda}\def\Λ{\Lambda}
\def\Σ{\Sigma}
\def\π{\pi}\def\Π{\Pi}
\def\ω{\omega}
\def\χ{\chi}\def\Χ{X}
\def\×{\times}
\def\⊆{\subseteq}
\def\|{\mid}
\def\→{\to}
\def\<{\langle}\def\>{\rangle}
\def\∵{\because}\def\∴{\therefore}
\def\bl{\texttt{\char32}}
\newcommand{\Reachable}{\mathop{\mathrm{Reachable}}\nolimits}
\newtheorem{theorem}{Theorem}[section]
\newtheorem{lemma}{Lemma}[section]
\newtheorem{proposition}{Proposition}[section]
\newtheorem{corollary}{Corollary}[section]
\theoremstyle{definition}
\newtheorem{definition}{Definition}[section]
\theoremstyle{remark}
\newtheorem*{remark}{Remark}
\theoremstyle{remark}
\newtheorem{example}{Example}[section]
\title{The Almost Equivalence by Asymptotic Probabilities for Regular Languages and Its Computational Complexities}
\author{Yoshiki Nakamura
\institute{Tokyo Institute of Technology\\ Tokyo, Japan}
\email{nakamura.y.ay@m.titech.ac.jp}
}
\begin{document}
\maketitle

\begin{abstract}
We introduce \emph{$p$-equivalence} by asymptotic probabilities, which is a weak almost-equivalence based on zero-one laws in finite model theory.
In this paper, we consider the computational complexities of $p$-equivalence problems for regular languages and provide the following details.
First, we give an robustness of $p$-equivalence and a logical characterization for $p$-equivalence.
The characterization is useful to generate some algorithms for $p$-equivalence problems by coupling with standard results from descriptive complexity.
Second, we give the computational complexities for the $p$-equivalence problems by the logical characterization.
The computational complexities are the same as for the (fully) equivalence problems.
Finally, we apply the proofs for $p$-equivalence to some generalized equivalences.
\end{abstract}

\section{Introduction}
  The study of the equivalence problem of regular languages dates back to the beginning of formal language theory.
  This problem is a fundamental problem and regular languages have many applications (see e.g., \cite{BKM01}).
  Regular expressions (REG), nondeterministic finite state automaton (NFA), and deterministic finite state automaton (DFA) are normally used to represent regular languages.
  Both the equivalence problem for NFAs and REGs are known as PSPACE-complete \cite{meyer1973word} and the equivalence problem for DFAs is known as NL-complete \cite{jones1975space}.

  In recent years, some \emph{almost-equivalences} for regular languages were introduced.
  These equivalences are weaker than the (fully) equivalence.
  For example, two languages, $L_1$ and $L_2$, are \emph{$f$-equivalent} \cite{Badr:2008:HO:1428728.1428753,ITA:8238099} if their symmetric difference, $L_1 \vartriangle L_2$\footnote{$L_1 \vartriangle L_2 = (L_1 \setminus L_2) \cup (L_2 \setminus L_1)$}, is a finite set; and two languages, $L_1$ and $L_2$, are \emph{$E$-equivalent} \cite{DBLP:conf/dlt/HolzerJ12} if their symmetric difference, $L_1 \vartriangle L_2$, is a subset of $E$, where $E$ is a regular language.
  In \cite{DBLP:conf/dlt/HolzerJ12}, it is pointed out that both $f$-equivalence problems and $E$-equivalence problems for NFAs are PSPACE-complete; and both $f$-equivalence problems and $E$-equivalence problems for DFAs are NL-complete, where the regular language $E$ is given by a DFA $\mathcal{A}_E$ as an input.
  In this paper, we define another almost-equivalence (\emph{$p$-equivalence}).
  $p$-equivalence is defined as follows.
  Let $\mu_n(L)$ be
  $$\mu_n(L) = \frac{\text{the number of strings of length $n$ that are in $L$}}{\text{the number of strings of length $n$}}.$$
  That is, $\mu_n(L)$ is the probability that a randomly chosen string of length $n$ is in a language $L$.
  The \emph{asymptotic probability} of $L$, $\mu(L)$, is defined as $\mu(L) = \lim_{n \to \infty} \mu_n(L)$ if the limit exists.
  Then, we define that two languages, $L_1$ and $L_2$, are \emph{$p$-equivalent} if $\mu(L_1 \triangle L_2) = 0$.

  The definition is based on the asymptotic probabilities in finite model theory, which are defined as follows.
  Let $\mu_n(\Phi)$ be
  $$\mu_n(\Phi) = \frac{\text{the number of finite graphs with $n$ nodes that satisfy $\Phi$}}{\text{the number of finite graphs with $n$ nodes}}.$$
  That is, $\mu_n(\Phi)$ is the probability that a randomly chosen graph with $n$ nodes satisfies a first-order sentence $\Phi$.
  (Note that this definition can be extended to any finite $\sigma$-structures from finite graphs.)
  The asymptotic probability of $\Phi$, $\mu(\Phi)$, is defined as $\mu(\Phi) = \lim_{n \to \infty} \mu_n(\Phi)$ if the limit exists.
  Then, we define that $\Phi$ is \emph{almost surely valid} if $\mu(\Phi) = 1$.

  In finite model theory, the next two theorems are some interesting results in decidability between validity and ``almost surely'' validity.
  \begin{theorem}[Trakhtenbrot \cite{trakhtenbrot1950impossibility}]\label{thm : undecidable}
    For any vocabulary $\sigma$ with at least one binary relation symbol,
    it is \emph{undecidable} whether a first-order sentence $\Phi$ of vocabulary $\sigma$ is valid over finite $\sigma$-structures.
  \end{theorem}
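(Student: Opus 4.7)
The plan is to reduce the halting problem to finite validity. It suffices to consider the minimal case $\sigma_0 = \{R\}$ with $R$ a single binary relation symbol: undecidability for $\sigma_0$ implies undecidability for every larger $\sigma$, because a $\sigma_0$-sentence is valid over finite $\sigma$-structures iff it is valid over finite $\sigma_0$-structures (every finite $\sigma_0$-structure extends to a $\sigma$-structure by arbitrary interpretations of the extra symbols). Given a Turing machine $M$, I would effectively construct a first-order $\{R\}$-sentence $\psi_M$ such that $\psi_M$ has a finite model iff $M$ halts on the empty input. Then $\neg\psi_M$ is valid over all finite $\{R\}$-structures iff $M$ does not halt, which is a co-r.e.-complete problem; on the other hand, finite validity is itself co-r.e.\ (enumerate finite structures and test), so the reduction forces finite validity to be co-r.e.-complete and in particular undecidable.

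For the construction, I would encode an entire halting computation of $M$ as a finite $\{R\}$-structure whose domain is the set of cells of a $T \times T$ space-time grid, where $T$ bounds both the number of steps and the tape used. The sentence $\psi_M$ would then assert, in a single conjunction: (i) the domain carries a discrete linear order with a minimum and maximum (coded by $R$); (ii) the order induces a row/column grid structure with definable row-successor and column-successor relations; (iii) the ``first row'' satisfies the initial-configuration axioms (blank tape, head at the leftmost cell, initial state); (iv) for each two consecutive rows, every three adjacent cells in the upper row correctly determine the middle cell of the lower row according to the transition function of $M$; (v) some row contains the halting state. If $M$ halts in $T$ steps, the actual computation furnishes a finite model of $\psi_M$; conversely, any finite model of $\psi_M$ decodes to a halting run of $M$.

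The main obstacle is fitting all of this auxiliary structure—grid adjacency, tape symbols, machine states—inside a single binary relation $R$. I would handle it by using a linear order definable from $R$ together with a small number of ``tag'' elements at the bottom of the order; membership of a cell in a given unary predicate (``this cell holds symbol $a$'', ``this cell carries state $q$'') is then recorded by whether $R$ holds between the cell and the corresponding tag. The delicate part is verifying that only genuine computation-encodings satisfy the axioms, so that no spurious finite models arise; this is achieved by making the grid axioms tight enough (for instance, requiring each row to have the same length as the column dimension, and requiring exactly one state-tag at exactly one cell per row). Once the coding discipline is in place, the five conjuncts above are routine first-order formulas, the equivalence ``$\psi_M$ is finitely satisfiable iff $M$ halts'' follows, and the theorem is proved.
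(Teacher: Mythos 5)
The paper does not prove this statement: Trakhtenbrot's theorem is quoted as classical background with a citation to \cite{trakhtenbrot1950impossibility}, so there is no in-paper argument to compare yours against. Judged on its own, your sketch follows the standard textbook reduction (halting problem to finite satisfiability to finite validity via negation), and the logical skeleton is sound: the restriction to $\sigma_0=\{R\}$ via reducts and expansions is correct, finite validity is indeed co-r.e.\ by enumerating finite structures, and the map $M \mapsto \neg\psi_M$ is a many-one reduction from the co-r.e.-complete non-halting problem, which yields undecidability.

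The one step that is wrong as literally stated is in your item (ii): from a discrete linear order on a domain of size $T^2$, the column-successor relation ``$y$ is exactly $T$ positions after $x$'' is \emph{not} first-order definable --- first-order logic over a linear order cannot count up to an unbounded parameter $T$. So the grid's vertical adjacency cannot be ``induced'' by the order; it must be explicitly recorded inside $R$ (using your tag mechanism to keep it apart from the order and from the symbol/state annotations) and then pinned down axiomatically, for instance by requiring that the vertical-adjacency pairs form an order-preserving matching between consecutive rows that sends the first cell of each row to the first cell of the next. With that repair --- which is exactly the kind of ``tight grid axiom'' you allude to but do not write down --- the remaining conjuncts (initial configuration, local transition windows, presence of the halting state) are routine, and the equivalence ``$\psi_M$ finitely satisfiable iff $M$ halts'' goes through. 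The residual work is bookkeeping, but note that packing order, adjacency, tape symbols and states into a single binary relation is where most published proofs spend their effort; your sketch correctly identifies this as the main obstacle without fully discharging it.
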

  \begin{theorem}[see e.g., Corollary 12.11 \cite{libkin2013elements}]\label{thm : decidable}
    There is an algorithm that given as input a finite $\sigma$-structure and a first-order sentence $\Phi$ of vocabulary $\sigma$, decides whether $\Phi$ is almost surely valid.
  \end{theorem}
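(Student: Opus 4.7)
The plan is to prove Theorem~\ref{thm : decidable} via the classical 0--1 law of Fagin and Glebskii et al. First, for each finite isomorphism type $\tau$ of a $\sigma$-structure on $k+1$ distinguished elements, I would introduce the \emph{extension axiom} $\eta_{k,\tau}$ stating: for every $k$-tuple realizing the restriction of $\tau$ to its first $k$ elements, there exists a further element completing the tuple to a realization of $\tau$. An elementary union-bound calculation shows that each $\eta_{k,\tau}$ is almost surely valid; fixing a $k$-tuple, the probability that no element extends it as required is of order $(1-c)^{n-k}$ for some constant $c>0$ depending only on $\tau$, and summing over the $O(n^k)$ choices of $k$-tuple still gives a probability tending to $0$.

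Second, I would show that the theory $T_{\mathrm{ext}}$ consisting of all extension axioms is complete. The standard route is to verify that $T_{\mathrm{ext}}$ is $\aleph_0$-categorical by a back-and-forth argument between any two countable models (this is precisely what the extension axioms are tailored to enable), and then invoke the test that $\aleph_0$-categoricity together with the absence of finite models implies completeness. Consequently, for every first-order sentence $\Phi$ either $T_{\mathrm{ext}} \vdash \Phi$ or $T_{\mathrm{ext}} \vdash \neg\Phi$. Combined with $\mu(\eta_{k,\tau})=1$ for every $(k,\tau)$ and the fact that a proof uses only finitely many hypotheses, this yields $\mu(\Phi)\in\{0,1\}$ (the 0--1 law) together with the sharper equivalence $\mu(\Phi)=1 \iff T_{\mathrm{ext}} \vdash \Phi$.

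Third, the algorithm is now essentially free: given $\Phi$, enumerate proofs from $T_{\mathrm{ext}}$ in a dovetailed fashion and halt as soon as a proof of $\Phi$ or of $\neg\Phi$ appears; output ``almost surely valid'' in the first case and ``not'' in the second. Completeness guarantees termination and soundness guarantees correctness.

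The main obstacle in this plan is the second step, the $\aleph_0$-categoricity of $T_{\mathrm{ext}}$: one must verify that the extension axioms are strong enough to drive the back-and-forth uniformly through an arbitrary relational vocabulary $\sigma$ of possibly mixed arities, and that the quantifier-free diagrams one encounters along the way can always be extended one element at a time. By contrast, step one is a routine probabilistic estimate and step three requires nothing beyond the recursive enumerability of first-order provability.
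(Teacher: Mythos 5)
Your proposal is correct and is exactly the canonical argument: extension axioms, a union-bound showing each is almost surely valid, completeness of $T_{\mathrm{ext}}$ via back-and-forth/$\aleph_0$-categoricity and the {\L}o\'s--Vaught test, the resulting equivalence $\mu(\Phi)=1 \iff T_{\mathrm{ext}} \vdash \Phi$, and decidability from completeness plus recursive axiomatizability. The paper itself gives no proof of this theorem---it is quoted from Corollary 12.11 of the cited reference, whose proof is precisely the one you outline---so the only caveats worth recording are that the argument requires $\sigma$ to be purely relational (the standing assumption in that source) and that the ``finite $\sigma$-structure'' mentioned as input in the statement plays no role in the algorithm.
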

  Relative to finite $\sigma$-structures, Theorem \ref{thm : decidable} tells us that it is \emph{decidable} whether a sentence is almost surely valid, whereas Theorem \ref{thm : undecidable} tells us that it is \emph{undecidable} whether a sentence is valid.
  One of our main motivation to consider $p$-equivalence is as follows:
  Does there exist some differences in decidability or in computational complexity between equivalence and $p$-equivalence?
  
  (In this paper, however, in the class of regular languages, we prove that there is no differences in computational complexity between equivalence and p-equivalence, e.g., the p-equivalence problem for REGs is also PSPACE-complete.)
\subsection*{Our results and contributions.}
  In this paper, we give the computational complexities of the $p$-equivalence problems for regular languages.
  Moreover, we also give these complexities of some generalized equivalence problems.

  First, we give a simple characterization of $p$-equivalence,
  coupled with standard results from descriptive complexity \cite{immerman2012descriptive}, which is used to decide the p-equivalence problem for various representations of regular languages.

  Second, we prove the computational hardness for the $p$-equivalence problems by modifying the proofs of the computational hardness for (fully) equivalence problems.

  Finally, we give the computational complexities for equivalence problems for some generalized equivalences based on the proofs for the $p$-equivalence problems.
  These results give a robustness of equivalence problems for regular languages in terms of the computational complexities when the equivalence is generalized.
  \subsection*{Paper outline.}
    The remainder of this paper is organized as follows:
    Section \ref{section : pre} gives the necessary definitions and terminology for languages, automaton, and $p$-equivalence;
    Section \ref{section : fund} shows some fundamental results of $p$-equivalence;
    Section \ref{section : upperbound} describes the computational complexity upper bounds of both the $p$-equivalence problems and some generalized equivalence problems;
    Section \ref{section : lowerbound} describes the computational complexity lower bounds of both the $p$-equivalence problems and some generalized equivalence problems;
    Section \ref{section : zero-one law} remarks about the problem to decide whether a given regular language obeys zero-one law \cite{DBLP:journals/corr/Sinya15a} based on previous sections.
\section{Preliminaries}\label{section : pre}
    In this paper, we consider three well-known standard models for regular languages, \emph{regular expression} (REG), \emph{deterministic finite state automaton} (DFA), and \emph{nondeterministic finite state automaton} (NFA).

    Let $A$ be a finite alphabet and let $A^*$ [$A^n$] be the set of all strings [of length $n$] over $A$.
    \begin{description}
        \item[REG]
            The syntax for REG is defined as follows:
            $$\alpha := 0 \mid 1 \mid a \in A \mid \alpha_1 \cdot \alpha_2 \mid \alpha_1 \cup \alpha_2 \mid \alpha_1^*$$
            Then, $L(\alpha)$ (the language of REG $\alpha$) is inductively defined as follows:\\
            \begin{inparaenum}[(1)]
                \item $L(0) = \emptyset$;
                \item $L(1) = \{\epsilon\}$;
                \item $L(a) = \{a\}$;
                \item $L(\alpha_1 \cdot \alpha_2) = L(\alpha_1) \cdot L(\alpha_2)$;
                \item $L(\alpha_1 \cup \alpha_2) = L(\alpha_1) \cup L(\alpha_2)$; and
                \item $L(\alpha_1^*) = \bigcup_{n \ge 0} \overbrace{L(\alpha_1) \cdot \ldots \cdot L(\alpha_1)}^\text{$n$ times}$,
            \end{inparaenum}
            where the concatenation operation $\cdot$ is defined as $L(\alpha_1) \cdot L(\alpha_2) = \{s_1 s_2 \mid s_1 \in L(\alpha_1),s_2 \in L(\alpha_2) \}$.
            We may omit $\cdot$ (i.e., $\alpha_1 \alpha_2$ denotes $\alpha_1 \cdot \alpha_2$).
            $\epsilon$ denotes the empty string.
        \item[DFA]
            A DFA $\mathcal{A}$ is a 5-tuple $(Q,A,\delta,q^0,F)$, where
            \begin{inparaenum}[(1)]
              \item $Q$ is a finite set of states;
              \item $A$ is a finite alphabet;
              \item $\delta : Q \times A \to Q$ is a transition function;
              \item $q^0 \in Q$ is the initial state; and
              \item $F \subseteq Q$ is a set of acceptance states.
            \end{inparaenum}
            We inductively define $\delta(q,s)$ by using the definition of $\delta(q,a)$ as follows.
            If $s = \epsilon$, then $\delta(q,s) = q$.
            Otherwise (i.e, $s = a s'$), $\delta(q,s) = \delta(\delta(q,a),s')$.
        
            Then, $L(\mathcal{A}) = \{ s \in A^* \mid \delta(q^0,s) \in F \}$.
        \item[NFA]
            A NFA $\mathcal{A}$ is a 5-tuple $(Q,A,\delta,q^0,F)$, where
            \begin{inparaenum}[(1)]
              \item $Q$ is a finite set of states;
              \item $A$ is a finite alphabet;
              \item $\delta : Q \times A \to 2^Q$ is a transition function;
              \item $q^0 \in Q$ is the initial state; and
              \item $F \subseteq Q$ is a set of acceptance states.
            \end{inparaenum}
            Let $\delta(Q',a) = \bigcup_{q \in Q'} \delta(q,a)$, where $Q' \subseteq Q$ and we inductively define $\delta(Q',s)$ by using the definition of $\delta(Q',a)$ as follows.
            If $s = \epsilon$, then $\delta(Q',s) = Q'$.
            Otherwise (i.e, $s = a s'$), $\delta(Q',s) = \delta(\delta(Q',a),s')$.
        
            Then, $L(\mathcal{A}) = \{ s \in A^* \mid \exists q \in \delta(q^0,s). q \in F\}$.
    \end{description}

    $\Reachable(q,q')$ in DFA[NFA] means that there exists a string $s$ such that $\delta (q,s) = q'$[$q' \in \delta (\{q\},s)$].

  \subsection{The almost equivalence by asymptotic probabilities and the zero-one law for formal language theory}
    The zero-one law in finite model theory is a property which means ``almost surely true" or ``almost surely false" (see e.g., \cite[Section 12]{libkin2013elements}).
    In formal language theory, zero-one law is investigated by Sin'ya \cite{DBLP:journals/corr/Sinya15a} as follows;
    A language $L$ obeys zero-one law if almost all strings are in $L$ or almost all strings are not in $L$.
    In other words, a language $L$ obeys zero-one law if $L$ is ``almost empty" or ``almost full".
    Formally, ``almost empty" and ``almost full" are defined by asymptotic probabilities.
    Let $L$ be a language.
    We define
    $$\mu _n(L) = \frac{|\{s \in A^n \mid s \in L\}|}{|A^n|}$$
    That is, $\mu _n(L)$ is the probability that a string of $n$ length given by uniform randomly is in $L$.
    We then define the \emph{asymptotic probability} of $L$ as $\mu (L) = \lim_{n \to \infty} \mu _n(L)$ if the limit exists.
    We say that $L$ is \emph{almost empty} if $\mu (L) = 0$ and $L$ is \emph{almost full} if $\mu (L) = 1$.
    We say that $L$ obeys \emph{zero-one law} if $L$ is almost empty or almost full.

    In this paper, we now define \emph{$p$-equivalence} by asymptotic probabilities as follows;
    we say that two languages, $L_1$ and $L_2$, are \emph{$p$-equivalent} if $\mu (L_1 \Delta L_2) = 0$.
    $L_1 \simeq_p L_2$ denotes that $L_1$ and $L_2$ are $p$-equivalent and $\alpha_1 \simeq_p \alpha_2$ denotes that $L(\alpha_1) \simeq_p L(\alpha_2)$ for two regular expressions, $\alpha_1$ and $\alpha_2$.
    Note that whether two languages are $p$-equivalent is relative to a given alphabet $A$.
    \begin{example}\label{example mu}
      We first consider a few simple examples about the asymptotic probabilities $\mu$.
      \begin{itemize}
        \item Obviously, $\mu(A^*) = 1$ and $\mu(\emptyset) = 0$.
        \item Let $\alpha_1 = (AA)^*$.
          Then, $\mu_n(L(\alpha_1)) = \begin{cases}
            1 & (\text{if $n$ is even})\\
            0 & (\text{if $n$ is odd})
            \end{cases}$.
          Hence, $\mu(L(\alpha_1))$ does not exist.
        \item
          Let $A = \{a_1,a_2\}$ and $\alpha_2 = a_1^*$.
          Then, $\mu_n(L(\alpha_2)) = \frac{1}{2^n}$.
          Hence, $\mu(L(\alpha_2)) = 0$.
        \item
          Let $A = \{a_1\}$ and $\alpha_3 = a_1^*$.
          Then, $\mu_n(L(\alpha_3)) = 1$.
          Hence, $\mu(L(\alpha_3)) = 1$.
      \end{itemize}
    \end{example}
    \begin{example}\label{example p}
      We now consider a few simple examples about $p$-equivalence.
      \begin{itemize}
        \item Let $A = \{a_1,a_2\}$, $\alpha_1 = A^*$ and $\alpha_1' = a_1 A^*$.
          Then, $\mu_n(L(\alpha_1) \vartriangle L(\alpha_1')) = \frac{|a_2 A^{n-1}|}{|A^n|} = \frac{1}{2}$.\\
          Hence, $\alpha_1 \simeq_p \alpha_1'$ does \emph{not} hold (by that $\mu(L(\alpha_1) \vartriangle L(\alpha_1')) = \frac{1}{2}$).
        \item
          Let $A = \{a_1,a_2,a_3\}$, $\alpha_2 = (a_1 \cup a_2)^*$, and $\alpha_2' = 0$.
          Then, $\mu_n(L(\alpha_2) \vartriangle L(\alpha_2')) = \frac{2^n}{3^n}$.\\
          Hence, $\alpha_2 \simeq_p \alpha_2'$ holds (by that $\mu(L(\alpha_2) \vartriangle L(\alpha_2')) = 0$).
        \item
          Let $A = \{a_1,a_2\}$, $\alpha_3 = (a_1 \cup a_2)^*$, and $\alpha_3' = 0$.
          Then, $\mu_n(L(\alpha_3) \vartriangle L(\alpha_3')) = 1$.\\
          Hence, $\alpha_3 \simeq_p \alpha_3'$ does \emph{not} hold (by that $\mu(L(\alpha_3) \vartriangle L(\alpha_3')) = 1$).
      \end{itemize}
    \end{example}
    \begin{remark}
      The numerator of the definition of $\mu _n(L)$, $|\{s \in A^n \mid s \in L\}|$, is called the \emph{density} of $L$, denoted $d_L(n)$ \cite[Chapter I\hspace{-.1em}X Section 2.2]{pin2010mathematical}.
      In particular, it is said that $L$ has \emph{polynomial density} \cite{szilard1992characterizing} if $d_L(n) = O(n^k)$ for some integer $k > 0$.
      This property is similar to $p$-equivalence.
      Actually, when $|A| \ge 2$, if $L$ has polynomial density, then $\mu (L) = 0$ holds.
      However, these properties are not equivalent because the converse does not clearly hold.
    \end{remark}
    \begin{remark}
      The asymptotic probability over finite strings is like a concrete example of the asymptotic probability over finite $\sigma $-structures.
      Precisely, these are different in that the former is for languages and the latter is for formulas.
      As for regular languages, regular languages are precisely those definable in monadic second-order logic over finite strings ($\mathrm{MSO}[<]$) \cite{buchi1960weak}.
      Thus, the asymptotic probability for regular languages is regarded as a concrete example of the asymptotic probability over finite $\sigma $-structures.
      In additon, the zero-one law considered in this paper is not about ``without order'', but about ``with order''.
      (This difference is important. For example, first-order logic without order ($\mathrm{FO}$) has zero-one law, while first-order logic with order ($\mathrm{FO}[<]$) does not \cite{libkin2013elements}.)
    \end{remark}

  \subsection{Descriptive Complexity}
    In this paper, we use the following results from descriptive complexity.
    \begin{theorem}[{{\cite[Corollary 9.22]{immerman2012descriptive}}}]\label{thm : NL}
      $\mathrm{FO(TC)}$ = NL
    \end{theorem}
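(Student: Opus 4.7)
The plan is to establish the equality $\mathrm{FO(TC)} = \mathrm{NL}$ by proving the two containments separately, following the standard descriptive-complexity template. Since the statement is cited verbatim from Immerman, I will only sketch the strategy here.

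For $\mathrm{FO(TC)} \subseteq \mathrm{NL}$, I would proceed by structural induction on formulas interpreted over ordered finite structures. The atomic cases and Boolean connectives are computable in logspace, and the first-order quantifiers only require cycling through polynomially many elements while maintaining a logarithmically sized counter, which still fits in NL. The interesting case is the transitive-closure operator: to decide $[\mathrm{TC}_{\bar{x},\bar{y}}\,\varphi(\bar{x},\bar{y})](\bar{u},\bar{v})$, I would nondeterministically guess the intermediate tuples of a witnessing path one step at a time, storing only the current tuple in memory and invoking the inductive NL-procedure for $\varphi$ as a subroutine. Because NL is closed under complement by Immerman--Szelepcs\'enyi, nesting TC under negations causes no additional trouble, since each negated TC can be replaced by a coNL, hence NL, subroutine.

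For $\mathrm{NL} \subseteq \mathrm{FO(TC)}$, the plan is the configuration-graph reduction. Fix an NL machine $M$ deciding a property of ordered $\sigma$-structures. A configuration of $M$ on an input of size $n$ consists of a state, an input-head position, and an $O(\log n)$ work tape; all of this data can be packaged as a fixed-arity tuple $\bar{c}$ whose components range over the universe of the input structure, using the built-in linear order to encode binary numerals of logarithmic length. The one-step transition relation $\delta_M(\bar{c},\bar{c}')$ is then first-order definable, because a single step depends only on constantly many bits of the input and of $\bar{c}$. Membership in $L(M)$ is therefore equivalent to $[\mathrm{TC}_{\bar{c},\bar{c}'}\,\delta_M(\bar{c},\bar{c}')](\bar{c}_{\mathrm{init}},\bar{c}_{\mathrm{acc}})$, where $\bar{c}_{\mathrm{init}}$ and a canonicalized $\bar{c}_{\mathrm{acc}}$ are themselves first-order definable from the input.

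The main obstacle is the encoding step in the $\mathrm{NL} \subseteq \mathrm{FO(TC)}$ direction: one must verify carefully that a logspace work tape of length $O(\log n)$ can be represented by a constant-arity tuple over a universe of size $n$, using the order to simulate binary positions, and that the resulting one-step relation on such tuples is \emph{genuinely} first-order, without smuggling in counting or arithmetic power beyond what the linear order already provides. Once this encoding is in place, the semantics of TC captures reachability in the configuration graph by definition, and both inclusions fall out.
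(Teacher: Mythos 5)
The paper does not prove this statement at all: it is imported verbatim as a black-box result from descriptive complexity (Immerman, Corollary 9.22) and used only as a tool in Theorem 4.1. Your sketch is a correct outline of the standard proof behind that citation --- guess-a-path plus Immerman--Szelepcs\'enyi for $\mathrm{FO(TC)} \subseteq \mathrm{NL}$, and the configuration-graph encoding over an ordered universe for the converse --- and you correctly isolate the two genuinely delicate points (the constant-arity encoding of the logspace work tape, and the reliance on a built-in linear order), so there is nothing to compare against on the paper's side.
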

    \begin{theorem}[{{\cite[Theorem 9.11]{immerman2012descriptive}}}]\label{thm : L}
      $\mathrm{FO(DTC)}$ = L
    \end{theorem}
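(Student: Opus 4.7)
The plan is to establish $\mathrm{FO(DTC)} = \mathrm{L}$ by proving both inclusions, working over ordered finite structures so that configurations can be encoded by tuples.

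For the inclusion $\mathrm{FO(DTC)} \subseteq \mathrm{L}$, I would evaluate an $\mathrm{FO(DTC)}$ sentence bottom-up on a deterministic logspace machine. Atomic formulas, Boolean connectives, and first-order quantifiers are straightforwardly handled in logspace: bounded quantification amounts to iterating a $\log n$-bit counter over universe elements and making a recursive call. The interesting case is the operator $[\mathrm{DTC}_{\bar x,\bar y}\varphi(\bar x,\bar y)](\bar u,\bar v)$, which is true iff $\bar v$ is reached from $\bar u$ by repeatedly taking the \emph{unique} successor $\bar y$ with $\varphi(\text{current},\bar y)$. I would simulate this by maintaining only the current tuple and a step counter bounded by $n^k$, at each step scanning the universe to find the unique successor (or rejecting if none/several exist) and aborting when the counter overflows. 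Each recursive subroutine adds only $O(\log n)$ to the space usage, and since the formula has fixed quantifier depth and nesting of DTC operators, the total space remains $O(\log n)$.

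For the inclusion $\mathrm{L} \subseteq \mathrm{FO(DTC)}$, given a deterministic logspace Turing machine $M$ accepting a language $L_0$, I would express ``$M$ accepts the input structure $\mathcal{S}$'' as a single $\mathrm{FO(DTC)}$ sentence. Since $M$ uses $O(\log n)$ space, a configuration of $M$ on input of size $n$ is described by a constant-length tuple of elements of $\mathcal{S}$ (encoding the state, head positions, and work-tape contents via the order). The one-step transition predicate $\mathrm{Next}(\bar c, \bar c')$ between configurations $\bar c$ and $\bar c'$ is a first-order formula, because it depends only on a constant-sized window around the heads and uses $\mathrm{BIT}$/order to decode positions. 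Then $M$ accepts iff $[\mathrm{DTC}_{\bar c,\bar c'} \mathrm{Next}(\bar c,\bar c')](\bar c_{\mathrm{init}},\bar c_{\mathrm{acc}})$ holds, where $\bar c_{\mathrm{init}}$ and $\bar c_{\mathrm{acc}}$ are first-order definable (one may WLOG assume a unique accepting configuration by erasing the work tape before accepting). Determinism of $M$ is essential here, as it matches the determinism of $\mathrm{DTC}$.

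The main obstacle I expect is the careful encoding step in the second direction: one must verify that the transition relation between logspace configurations is genuinely first-order definable, which requires arithmetic on positions (addition/multiplication of indices) expressible via the built-in order together with $\mathrm{BIT}$, and a windowing argument showing that one step of $M$ only depends locally on the tape contents near the heads. Once this encoding is fixed, the rest of the argument is essentially a clean translation between the syntactic DTC operator and the semantics of deterministic reachability in the configuration graph.
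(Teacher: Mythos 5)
This statement is not proved in the paper at all: it is Immerman's Theorem~9.11, imported as a black box from \cite{immerman2012descriptive} and used only to conclude that the unary-DFA case of the $p$-equivalence problem is in L. So there is no in-paper proof to compare against; what you have written is the standard textbook argument, and it is essentially correct. Two small points are worth tightening. First, in the $\mathrm{FO(DTC)} \subseteq \mathrm{L}$ direction, the case of \emph{several} successors should be handled according to the semantics of $\mathrm{DTC}$ (the deterministic reduct $\varphi_d(\bar x,\bar y) \equiv \varphi(\bar x,\bar y) \land \forall \bar z(\varphi(\bar x,\bar z) \to \bar z = \bar y)$): the deterministic path simply terminates there, so the subformula evaluates to false unless $\bar v$ was already reached --- it is not a rejection of the whole computation. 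Second, you should state explicitly that the equality holds over \emph{ordered} structures with the numeric predicates ($\le$, $\mathrm{BIT}$) built in, since without order $\mathrm{FO(DTC)}$ is strictly weaker than L; this hypothesis is harmless for the paper's application, where the structures $\mathcal{M}^{\mathcal{A}}$ are built from automata and can be presented with an order. With those caveats your two inclusions, the configuration-tuple encoding, the first-order definability of the one-step relation via a local window plus $\mathrm{BIT}$-arithmetic, and the normalization to a unique accepting configuration are exactly the ingredients of Immerman's proof.
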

    \begin{theorem}[{{\cite[Corollary 10.29]{immerman2012descriptive}}}]\label{thm : PSPACE}
      $\mathrm{SO(TC)}$ = PSPACE
    \end{theorem}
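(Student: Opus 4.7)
The statement $\mathrm{SO(TC)} = \mathrm{PSPACE}$ is Immerman's classical characterization, so my proof plan splits into the two standard inclusions, with the bulk of the work lying in the lower bound.

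For $\mathrm{SO(TC)} \subseteq \mathrm{PSPACE}$, the plan is a straightforward structural induction on formulas, evaluated on an input structure $\mathfrak{A}$ of universe size $n$. Atomic formulas and Boolean connectives cost nothing beyond polynomial space. A first-order quantifier enumerates $n$ elements, which we bookkeep with a counter of size $\log n$. A second-order quantifier $\exists R$ over a $k$-ary relation $R$ is enumerated bit-by-bit: we walk through the $2^{n^k}$ candidate relations reusing the same polynomial-size workspace, recursing on the body for each. The $\mathrm{TC}$ operator $[\mathrm{TC}_{\bar x, \bar y}\,\varphi](\bar u, \bar v)$ asks for reachability in the graph defined by $\varphi$; since this graph is itself polynomial-space-describable, reachability is in $\mathrm{NL} \subseteq \mathrm{PSPACE}$ (or, equivalently, solved in PSPACE by breadth-first search using Savitch). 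Composing the bounds shows that the total space usage remains polynomial.

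For $\mathrm{PSPACE} \subseteq \mathrm{SO(TC)}$, I would simulate a fixed polynomial-space Turing machine $M$ running in space $n^c$ on an ordered input structure. The key idea is that a configuration of $M$ can be encoded by a tuple $(R_{\mathrm{tape}}, R_{\mathrm{head}}, R_{\mathrm{state}})$ of relations of arity $c$ on the $n$-element universe: such a $c$-ary relation has $n^c$ bits of information, exactly matching the tape length. The successor relation on configurations (``one step of $M$'') is then first-order definable from these encoding relations, since the update of $M$ depends only on the local cell around the head. Finally, starting and accepting configurations are FO-definable. Acceptance of $M$ on the input becomes an $\mathrm{SO}$ existential quantification over two such configuration tuples $C_0, C_{\mathrm{acc}}$, asserting that $C_0$ is initial, $C_{\mathrm{acc}}$ is accepting, and $[\mathrm{TC}\,\mathrm{Step}](C_0, C_{\mathrm{acc}})$ holds, where $\mathrm{Step}$ is the FO formula expressing a single-step transition.

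The main obstacle I anticipate is in the lower bound: namely the careful bookkeeping to ensure that the transition formula $\mathrm{Step}(\bar C, \bar C')$ is genuinely first-order definable, despite the fact that the head can be at any of the $n^c$ positions of the tape. The trick is to exploit the built-in linear order on the universe to lexicographically order tuples, so that ``the cell to the left/right of the head'' is expressible by a successor formula on $c$-tuples, and to quantify universally over positions to state that all non-head cells are unchanged. Once this local-update formula is in place, the outer $\mathrm{SO}\,\exists$ followed by a single $\mathrm{TC}$ does all the remaining work, and the two containments match up to give the equality.
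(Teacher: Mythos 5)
This statement is imported verbatim from Immerman's book (Corollary 10.29) and the paper gives no proof of it, so there is nothing internal to compare against; your sketch is the standard textbook argument and is essentially sound. One point needs more care, though, because your two halves are not quite talking about the same logic. In your lower bound the $\mathrm{TC}$ operator is applied to the step relation on \emph{configuration tuples}, i.e.\ to tuples of second-order objects $(R_{\mathrm{tape}},R_{\mathrm{head}},R_{\mathrm{state}})$ --- this second-order $\mathrm{TC}$ is exactly what gives $\mathrm{SO(TC)}$ its PSPACE power. But your upper bound only discusses $[\mathrm{TC}_{\bar x,\bar y}\varphi](\bar u,\bar v)$ over first-order tuples and asserts the reachability instance lies in $\mathrm{NL}$; for second-order $\mathrm{TC}$ the underlying graph has $2^{n^{O(1)}}$ vertices, so the $\mathrm{NL}$ bound is wrong and ``breadth-first search'' would use exponential space. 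The correct statement is that each vertex is describable in $n^{O(1)}$ bits and the edge relation is decidable in polynomial space by induction, so Savitch's midpoint-recursion decides reachability in space $O(\log^2(2^{n^{O(1)}}))=n^{O(1)}$. With that repair (and the standing assumption, which you do invoke, that structures carry a built-in linear order so that $c$-tuples can be ordered lexicographically and the local tape update is first-order), both inclusions go through as you describe.
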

    $\mathrm{TC}$ is a special function such that, for any binary relation $R$, $\mathrm{TC}(R)$ is the transitive closure of $R$.
    $\mathrm{DTC}$ is also a special function such that, for any \emph{deterministic} binary relation $R$ (i.e, $(q,q') \in R \land  (q,q'') \in R \to q' = q''$), $\mathrm{DTC}(R)$ is the transitive closure of $R$.
\section{Fundamental results of $p$-equivalence} \label{section : fund}
  In this section, we give some fundamental results of $p$-equivalence.

  First, $p$-equivalence is an equivalence relation (i.e., $\simeq_p$ is
  \begin{inparaenum}[(1)]
    \item\label{equivalence : 1} reflective : $L_1 \simeq_p L_1$,
    \item\label{equivalence : 2} symmetric : $L_1 \simeq_p L_2 \Rightarrow  L_2 \simeq_p L_1$, and
    \item\label{equivalence : 3} transitive : $L_1 \simeq_p L_2 \land  L_2 \simeq_p L_3 \Rightarrow  L_1 \simeq_p L_3$.
  \end{inparaenum}).
  \ref{equivalence : 1} and \ref{equivalence : 2} obviously hold.
  \ref{equivalence : 3} is proved by the following inequality.
  $0 \le \frac{|(L_1 \vartriangle L_3) \cap A^n|}{|A^n|}
  \le \frac{|(L_1 \vartriangle L_2) \cap A^n|}{|A^n|} +  \frac{|(L_2 \vartriangle L_3) \cap A^n|}{|A^n|} = \mu_n(L_1 \vartriangle L_2) + \mu_n(L_2 \vartriangle L_3)$.
  On the right hand side, by the assumption,
  $\lim_{n \to \infty} \mu _n(L_1 \vartriangle L_2) + \mu _n(L_2 \vartriangle L_3) = 0$.
  Therefore, by the squeeze theorem, $\mu (L_1 \vartriangle L_3) = 0$.
  Hence, $L_1 \simeq_p L_3$.

  \subsection{$p$-equivalence and $f$-equivalence}
    In this subsection, we show a relationship between $p$-equivalence and $f$-equivalence.
    \begin{proposition}\label{proposition : unary finite prob}\leavevmode
      \begin{enumerate}[(1)]
        \item $\operatorname{=} \subseteq \operatorname{\simeq}_f \subseteq \operatorname{\simeq}_p$.
        \item When $|A| \ge 2$, $\operatorname{\simeq}_f \subsetneq \operatorname{\simeq}_p$.
        \item When $|A| = 1$, $\operatorname{\simeq}_f$ is equal to $\operatorname{\simeq}_p$.
      \end{enumerate}
    \end{proposition}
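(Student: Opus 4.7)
The plan is to treat the three claims in order.

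For (1), the inclusion of equality into $\simeq_f$ holds because $L \vartriangle L = \emptyset$ is (trivially) finite. For $\simeq_f \subseteq \simeq_p$, I would argue that any finite set $L_1 \vartriangle L_2$ is contained in $\bigcup_{k \le N} A^k$ for some $N$, so $(L_1 \vartriangle L_2) \cap A^n = \emptyset$ for every $n > N$; hence $\mu_n(L_1 \vartriangle L_2) = 0$ for all large $n$, and the limit therefore exists and equals $0$.

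For (2), a single witness suffices. I would take $L_1 = \{a_1\}^*$ and $L_2 = \emptyset$, where $a_1$ is any fixed letter of $A$. Since $|A| \ge 2$, one has $\mu_n(L_1 \vartriangle L_2) = 1/|A|^n \to 0$, so $L_1 \simeq_p L_2$, whereas $L_1 \vartriangle L_2 = L_1$ is infinite, so $L_1 \not\simeq_f L_2$. Combined with (1) this gives the strict inclusion.

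For (3), when $|A| = 1$ we have $|A^n| = 1$, so $\mu_n(L) \in \{0,1\}$ for every $L$ and every $n$. The key observation is that a $\{0,1\}$-valued sequence converges to $0$ iff it is eventually $0$. Applying this to $\mu_n(L_1 \vartriangle L_2)$ under the hypothesis $L_1 \simeq_p L_2$, I conclude that only finitely many strings $a^n$ lie in $L_1 \vartriangle L_2$, so the symmetric difference is finite, i.e., $L_1 \simeq_f L_2$. The reverse inclusion is already covered by (1).

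None of the three parts hides any real obstacle; everything follows by unpacking the definition of $\mu$. The one mildly delicate point is in (3), where one must recognise that a convergent $\{0,1\}$-valued sequence is eventually constant — this is precisely what collapses $p$-equivalence onto $f$-equivalence in the unary case, and what fails once $|A| \ge 2$, as exhibited by the witness in (2).
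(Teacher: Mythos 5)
Your proof is correct and follows essentially the same route as the paper: finiteness forces $\mu_n$ to vanish eventually for (1), a concrete witness for (2), and the observation that a convergent $\{0,1\}$-valued sequence is eventually constant for (3) (the paper phrases this as a contrapositive, which is the same argument). The only difference is your witness $a_1^*$ versus $\emptyset$ in (2), which is if anything slightly cleaner than the paper's cited example $(a_1\cup a_2)^*$ over a three-letter alphabet, since yours works uniformly for every $|A|\ge 2$.
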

    \begin{proof}\leavevmode
      \begin{inparaenum}[(1)]
        \item $\operatorname{\simeq}_f \subseteq \operatorname{\simeq}_p$ is followed by that, if $L_1 \vartriangle L_2$ is a finite set, then $\mu(L_1 \vartriangle L_2) = 0$.
        \item It is proved by that $\alpha_2 \simeq_p \alpha_2'$ holds, whereas $\alpha_2 \simeq_f \alpha_2'$ does not hold, where $\alpha_2$ and $\alpha_2'$ are the regular expressions in Example \ref{example p}.
        \item We are enough to prove that $\operatorname{\simeq}_f \supseteq \operatorname{\simeq}_p$.
          We prove the contraction , i.e., if $L_1 \not\simeq_f L_2$, then $L_1 \not\simeq_p L_2$.
          Note that $\mu _n(L_1 \vartriangle L_2)$ is $0$ or $1$ because $|A| = 1$ and then $|A^n| = 1$.
          If $L_1 \not\simeq_f L_2$, then $L_1 \vartriangle L_2$ is an infinite set, i.e., $\mu _n(L_1 \vartriangle L_2) = 1$ occurs infinitely.
          Therefore, $\lim_{n \to \infty} \mu _n(L_1 \vartriangle L_2) \neq 0$.
          Hence, $L_1 \not\simeq_p L_2$.
      \end{inparaenum}
    \end{proof}

  \subsection{A robustness of $p$-equivalence}
      We have defined the asymptotic probability of $L$ as (1) $\mu _n(L) = \frac{|\{s \in A^n \mid s \in L\}|}{|A^n|}$.
      However, some other definitions of the asymptotic probability of $L$ have been considered, for example,
      \begin{inparaenum}[(1)]
        \setcounter{enumi}{1}
        \item $\mu ^*_n(L) = \frac{|\{s \in A^{< n} \mid s \in L\}|}{|A^{< n}|}$ and
        \item $\delta_n(L) =  \frac{\sum_{k = 0}^{n-1} \mu _k(L)}{n}$,
      \end{inparaenum}
      where $A^{< n} = \bigcup_{0 \le k < n} A^k$.
      ($\mu _n$ is used by \cite{berstel1973densite}, Salomaa and Soittola \cite{salomaa2012automata}, Sin'ya \cite{DBLP:journals/corr/Sinya15a}, and us;
      $\mu ^*_n$ is used by Berstel \cite{berstel1973densite};
      $\delta_n$ is used by Berstel et al. \cite{berstel2010codes}.
      More details are written in \cite{sinya-phd16}.)
      Let $\mu ^*(L) = \lim_{n \to \infty}\mu ^*_n(L)$ and $\delta(L) = \lim_{n \to \infty}\delta_n(L)$ in the same way as $\mu (L)$.

      Proposition \ref{prop : equiv} says that the three almost equivalences defined by $\mu $, $\mu ^*$, and $\delta$ are all equivalent over regular languages.
      To prove it, we recall the following two theorems.
      \begin{theorem}[Stolz-Ces\`aro theorem (See e.g., \cite{muresan2015concrete})]\label{theorem : Stolz}
        If $\lim_{n \to \infty} \frac{a_{n+1} - a_n}{b_{n+1} - b_n} = l$, then $\lim_{n \to \infty} \frac{a_n}{b_n} = l$, where $\{a_n\}_{n = 0}^\infty$ is a sequence of integers, $\{b_n\}_{n = 0}^\infty$ is a sequence of integers and strictly monotone, and $l$ is a real number.
      \end{theorem}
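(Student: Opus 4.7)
The plan is a standard $\varepsilon$--$N$ argument combined with telescoping. The key preliminary observation is that $\{b_n\}$ being strictly monotone integers forces $|b_n| \to \infty$, so after replacing $(a_n,b_n)$ with $(-a_n,-b_n)$ if necessary I can reduce to the case where $b_n$ is strictly increasing with $b_n \to +\infty$ (this transformation preserves both the hypothesis and the conclusion, since it multiplies numerator and denominator of every ratio by $-1$).

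Given $\varepsilon > 0$, fix $N$ so that $\left| \frac{a_{n+1} - a_n}{b_{n+1} - b_n} - l \right| < \varepsilon$ for all $n \ge N$. Since $b_{n+1} - b_n > 0$, this rewrites as the sandwich
\[ (l - \varepsilon)(b_{n+1} - b_n) \,<\, a_{n+1} - a_n \,<\, (l + \varepsilon)(b_{n+1} - b_n). \]
Summing over $n = N, N+1, \ldots, m-1$ telescopes both sides to
\[ (l - \varepsilon)(b_m - b_N) \,<\, a_m - a_N \,<\, (l + \varepsilon)(b_m - b_N), \]
and dividing through by $b_m$ (positive for large $m$) yields
\[ (l - \varepsilon)\Bigl(1 - \tfrac{b_N}{b_m}\Bigr) + \tfrac{a_N}{b_m} \,<\, \tfrac{a_m}{b_m} \,<\, (l + \varepsilon)\Bigl(1 - \tfrac{b_N}{b_m}\Bigr) + \tfrac{a_N}{b_m}. \]

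Letting $m \to \infty$, both $b_N/b_m$ and $a_N/b_m$ tend to zero since $a_N, b_N$ are fixed while $b_m \to \infty$, so the outer bounds tend to $l - \varepsilon$ and $l + \varepsilon$ respectively. Therefore $l - \varepsilon \le \liminf_{m} \tfrac{a_m}{b_m} \le \limsup_{m} \tfrac{a_m}{b_m} \le l + \varepsilon$, and since $\varepsilon$ was arbitrary the limit exists and equals $l$. The only mildly delicate step I foresee is the preliminary reduction to the strictly-increasing unbounded case — everything else is a routine telescoping argument — so if the statement were weakened (e.g.\ dropping the integer hypothesis so that $b_n$ could be bounded and monotone), the proof would require strengthening the hypothesis to $|b_n| \to \infty$ to avoid the degenerate situation where $b_{n+1}-b_n \to 0$ makes the ratio uninformative.
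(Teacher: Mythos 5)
Your proof is correct: the reduction to the strictly increasing, divergent case (which is exactly where the integrality hypothesis is used), the telescoping of the two-sided bound, and the $\liminf$/$\limsup$ sandwich are all sound, and this is the standard argument for the Stolz--Ces\`aro theorem. Note, however, that the paper does not prove this statement at all --- it is recalled as a known result with a citation to the literature --- so there is no in-paper proof to compare against; your write-up simply supplies the standard proof that the citation points to.
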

      \begin{theorem}[Lynch \cite{lynch1993convergence}]\label{theorem : Lynch}
        For any regular language $L$, there exists a positive integer $a$ such that, for any integer $0 \le b < a$, $\lim_{n \to \infty} \mu_{a n + b}(L)$ exists.
        (Let $l_b$ be $\lim_{n \to \infty} \mu_{a n + b}(L)$.)
      \end{theorem}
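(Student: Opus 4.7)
My plan is to pass to a DFA representation of $L$ and analyze the powers of its transfer matrix spectrally. Fix a DFA $\mathcal{A} = (Q, A, \delta, q^0, F)$ for $L$ and let $M$ be the $|Q| \times |Q|$ transition-count matrix with entries $M_{q,q'} = |\{a \in A \mid \delta(q,a) = q'\}|$. Every row of $M$ sums to $|A|$, so $N := (1/|A|) \cdot M$ is row-stochastic. A direct induction on $n$ gives $|L \cap A^n| = \sum_{q \in F} (M^n)_{q^0,q}$, hence
\[
  \mu_n(L) = \sum_{q \in F} (N^n)_{q^0, q}.
\]
The theorem therefore reduces to showing that, for some positive integer $a$, the matrix sequence $N^{an+b}$ converges entrywise as $n \to \infty$ for each $0 \le b < a$.

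The second step is a spectral/Jordan decomposition of $N$. Since $N$ is nonnegative with spectral radius $1$, the standard structure theorem for stochastic matrices (or, equivalently, the decomposition of the associated finite Markov chain into transient states and closed recurrent classes) tells us that every eigenvalue $\lambda$ of $N$ with $|\lambda|=1$ is a root of unity and \emph{semisimple}, while all other eigenvalues satisfy $|\lambda| < 1$. Concretely, the transient block has spectral radius strictly less than $1$, and each closed recurrent class is an irreducible stochastic matrix whose unit-modulus eigenvalues are precisely the $d$-th roots of unity for its period $d$, each simple. Consequently one may write
\[
  N^n = \sum_{\lambda :\, |\lambda|=1} \lambda^n P_\lambda + R_n,
\]
where the $P_\lambda$ are the spectral projectors onto the corresponding eigenspaces and $R_n \to 0$ entrywise (in fact exponentially).

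The choice of $a$ is now forced: take $a$ to be the least common multiple of the orders of the unit-modulus eigenvalues of $N$ (equivalently, the LCM of the periods of the closed recurrent classes of the chain). For each fixed $0 \le b < a$ and each unit-modulus $\lambda$, $\lambda^{an+b} = \lambda^b$ is independent of $n$, and so
\[
  N^{an+b} \;\longrightarrow\; \sum_{\lambda :\, |\lambda|=1} \lambda^b P_\lambda \qquad (n \to \infty).
\]
Reading off the $(q^0,q)$-entry and summing over $q \in F$ yields the existence of $l_b = \lim_{n \to \infty} \mu_{an+b}(L)$.

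The only non-formal step in this plan is the semisimplicity claim for unit-modulus eigenvalues of $N$. If some such eigenvalue carried a nontrivial Jordan block, the entries of $N^n$ would grow polynomially in $n$, contradicting the bound $0 \le (N^n)_{q,q'} \le 1$ that follows from $N^n$ being row-stochastic. This boundedness observation, combined with the Markov-chain decomposition into transient states and closed recurrent classes, is the one ingredient that is not pure bookkeeping. A completely alternative route avoids spectral theory altogether: the ordinary generating function $\sum_n |L \cap A^n| x^n$ is rational because $L$ is regular, and a partial-fractions analysis of its poles on the circle $|x| = 1/|A|$ produces the same eventually $a$-periodic behavior of the coefficients. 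The matrix proof feels more conceptually transparent, and the period $a$ it produces has a direct combinatorial meaning in terms of the strongly connected components of the DFA.
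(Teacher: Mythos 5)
Your proof is correct and self-contained. Note, however, that the paper does not prove this statement at all: it is imported verbatim from Lynch's paper (the citation \cite{lynch1993convergence}) and used as a black box in the proof of Proposition 3.2, so there is no in-paper argument to compare yours against. Your route --- represent $\mu_n(L)$ as $\sum_{q\in F}(N^n)_{q^0,q}$ for the row-stochastic matrix $N$ of a complete DFA, observe that the closed-recurrent-class/transient decomposition plus Perron--Frobenius forces every unit-modulus eigenvalue to be a root of unity, and that the uniform bound $0\le (N^n)_{q,q'}\le 1$ rules out nontrivial Jordan blocks on the unit circle, then take $a$ to be the lcm of the orders of those eigenvalues --- is the standard and essentially optimal way to establish the eventual periodicity, and all the steps check out (in particular you correctly rely on boundedness of the powers rather than on the block structure for semisimplicity, which matters because distinct recurrent classes can share unit-modulus eigenvalues). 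The alternative you sketch via rationality of $\sum_n |L\cap A^n| x^n$ and partial fractions is equally valid. The only caveats are cosmetic: you should say explicitly that the DFA is complete (the paper's definition of DFA guarantees this, since $\delta$ is total), and that $A$ is nonempty so that $\mu_n$ is defined.
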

      \begin{proposition}\label{prop : equiv}
        For any regular language $L$,
        the following three conditions are all equivalent.
        \begin{inparaenum}[(1)]
          \item $\mu (L) = 0$;
          \item $\mu ^*(L) = 0$; and
          \item $\delta(L) = 0$.
        \end{inparaenum}
      \end{proposition}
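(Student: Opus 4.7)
The plan is to introduce a common refinement of all three conditions via Lynch's theorem (Theorem \ref{theorem : Lynch}) and then reduce the equivalence to a single simple condition on the Lynch limits. Let $a$ be the period given by Lynch, and write $l_b = \lim_{n \to \infty}\mu_{an+b}(L)$ for $0 \le b < a$. I will prove that each of the three conditions is equivalent to the statement that $l_b = 0$ for every $0 \le b < a$; call this condition $(\star)$.

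The equivalence $\mu(L) = 0 \Leftrightarrow (\star)$ is essentially just that a sequence converges to $0$ iff each subsequence in a finite partition does: one direction is immediate from the definition of subsequence, and the other is a standard $\varepsilon$--$N$ argument in which $N$ is chosen uniformly over the finitely many residue classes modulo $a$.

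For $\delta(L) = 0 \Leftrightarrow (\star)$, I split $\sum_{k < n}\mu_k(L)$ along $k \bmod a$, apply the usual fact that Ces\`aro averages of a convergent sequence converge to the same limit, and obtain $\delta(L) = \frac{1}{a}\sum_{b=0}^{a-1} l_b$. Since every $l_b \ge 0$, this sum vanishes iff every $l_b$ does.

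For $\mu^*(L) = 0 \Leftrightarrow (\star)$, the case $|A| = 1$ is immediate because then $\mu^*_n(L) = \delta_n(L)$. For $|A| \ge 2$ the direction $(\star) \Rightarrow \mu^*(L) = 0$ is a direct application of the Stolz--Ces\`aro theorem (Theorem \ref{theorem : Stolz}) to the integer sequences $a_n = |L \cap A^{< n}|$ and $b_n = |A^{< n}|$, since $(a_{n+1}-a_n)/(b_{n+1}-b_n)$ equals $\mu_n(L)$ exactly. The converse is the main obstacle. Assuming some $l_{b^*} > 0$, I will show $\liminf_n \mu^*_n(L) > 0$ by picking $n$ of the form $aq + b^* + 1$ with $q$ large and observing that already the single contribution $|L \cap A^{aq+b^*}|/|A^{< n}|$ is bounded below by roughly $(|A|-1)\,l_{b^*}/(2|A|)$, contradicting $\mu^*(L) = 0$. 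The geometric growth of $|A^{< n}|$ when $|A| \ge 2$ is essential here and is precisely why Stolz--Ces\`aro cannot be applied symmetrically to get the converse for free.
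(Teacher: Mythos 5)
Your proof is correct, and it diverges from the paper's in an interesting way on the hardest implication. Both arguments hinge on Lynch's theorem and on reducing everything to the vanishing of the residue-class limits $l_b$, and both use Stolz--Ces\`aro for the directions $\mu(L)=0 \Rightarrow \mu^*(L)=0$ and $\mu(L)=0 \Rightarrow \delta(L)=0$ in essentially the same way (your $a_n = |L\cap A^{<n}|$, $b_n = |A^{<n}|$ is exactly the paper's instantiation). For $\delta(L)=0 \Rightarrow \mu(L)=0$ the two proofs also agree in substance: your identity $\delta(L) = \frac{1}{a}\sum_b l_b$ is what the paper extracts as a lower bound via a $\lfloor n/a\rfloor$ decomposition plus a second application of Stolz--Ces\`aro. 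The real difference is in $\mu^*(L)=0 \Rightarrow \mu(L)=0$: the paper decomposes $\mu^*_n(L)$ into a weighted convex combination over residue classes, applies Stolz--Ces\`aro to each weighted block, and identifies the limit of the lower bound as $\sum_b l_b\,|A|^b/\sum_{b'}|A|^{b'}$, whereas you observe that when $|A|\ge 2$ the top layer $A^{n-1}$ already carries a constant fraction $\frac{|A|-1}{|A|}$ of $|A^{<n}|$, so a single term $|L\cap A^{aq+b^*}|/|A^{<aq+b^*+1}|$ stays bounded below by about $\frac{(|A|-1)\,l_{b^*}}{2|A|}$ along that subsequence, contradicting $\mu^*(L)=0$. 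Your version is more elementary (no second Stolz--Ces\`aro, no weighted decomposition) and makes explicit why the geometric growth of $|A^{<n}|$ is the operative mechanism, at the cost of needing the separate (trivial) case $|A|=1$. One small wording point: what your subsequence argument yields is that $\mu^*_n(L)$ is bounded away from $0$ along $n = aq+b^*+1$, i.e.\ that the full limit cannot be $0$; it does not give $\liminf_n \mu^*_n(L) > 0$ over all $n$, but that stronger claim is not needed for the contradiction.
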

      \begin{proof}
        1. $\Rightarrow $ 2. and 1. $\Rightarrow $ 3. are proved directly by Theorem \ref{theorem : Stolz}.
        (This part holds even if $L$ is not a regular language.)

        Conversely,
        3. $\Rightarrow $ 1. is proved by the following inequality.
        \begin{align*}
          \delta_n(L)
          = \sum_{k = 0}^{n-1} \frac{\mu _k(L)}{n}
          \ge& \sum_{b = 0}^{a-1} \frac{\sum_{k' = 0}^{m-1} \mu _{a k' + b}(L)}{a m} \times \frac{am}{n}
        \end{align*}
        ,where $m = \lfloor \frac{n}{a} \rfloor$ and $a$ is an integer enjoying the properties stated in Theorem \ref{theorem : Lynch}.
        Then, by Theorem \ref{theorem : Stolz} (Let $a_m = \sum_{k' = 0}^{m-1} \mu _{a k' + b}(L)$ and $b_m = am$), the limit of the above formula as $n$ approaches infinity is $\sum_{b = 0}^{a-1} \frac{l_b}{a}$.
        By $\lim_{n \to \infty} \delta_n(L) = 0$ and the squeeze theorem, $l_b = 0$ for every $b$.
        Hence, $\lim_{n \to \infty} \mu _n(L) = 0$.

        Moreover, 2. $\Rightarrow $ 1. is proved by the following inequality.
        \begin{align*}
          \mu ^*_n(L)
          = \sum_{k = 0}^{n-1} \frac{\mu _k(L) \times |A|^k}{\sum_{k = 0}^{n-1} |A|^k}
          \ge& \sum_{b = 0}^{a - 1} \frac{\sum_{k' = 0}^{m -1} \mu _{a k' + b}(L) \times |A|^{a k' + b}}{\sum_{k' = 0}^{m - 1} |A|^{a k' + b}} \times
          \frac{\sum_{k' = 0}^{m - 1} |A|^{a k' + b}}{\sum_{b' = 0}^{a - 1}\sum_{k' = 0}^{m - 1} |A|^{a k' + b'}}
          \times \frac{\sum_{b' = 0}^{a - 1}\sum_{k' = 0}^{m - 1} |A|^{a k' + b'}}{\sum_{k = 0}^{n - 1} |A|^k}\\
          =& \sum_{b = 0}^{a - 1} \frac{\sum_{k' = 0}^{m -1} \mu _{a k' + b}(L) \times |A|^{a k' + b}}{\sum_{k' = 0}^{m - 1} |A|^{a k' + b}} \times
          \frac{|A|^b}{\sum_{b' = 0}^{a - 1}|A|^{b'}}
          \times \frac{\sum_{b' = 0}^{a - 1}\sum_{k' = 0}^{m - 1} |A|^{a k' + b'}}{\sum_{k = 0}^{n - 1} |A|^k}
        \end{align*}
        ,where $m = \lfloor \frac{n}{a} \rfloor$ and $a$ is an integer enjoying the properties stated in Theorem \ref{theorem : Lynch}.
        Then, by Theorem \ref{theorem : Stolz} (Let $a_m = \sum_{k' = 0}^{m -1} \mu _{a k' + b}(L) \times |A|^{a k' + b}$ and $b_m = \sum_{k' = 0}^{m - 1} |A|^{a k' + b}$), the limit of the above formula as $n$ approaches infinity is $\sum_{b = 0}^{a - 1} l_b \times \frac{|A|^b}{\sum_{b' = 0}^{a-1} |A|^{b'}}$.
        By $\lim_{n \to \infty} \mu ^*_n(L) = 0$ and the squeeze theorem, $l_b = 0$ for every $b$.
        Hence, $\lim_{n \to \infty} \mu _n(L) = 0$.
      \end{proof}

  \subsection{The DFA condition}
    In \cite{DBLP:journals/corr/Sinya15a}, the \emph{zero-one law} regarding the above asymptotic probabilities is introduced and some algebraic characterizations are given.
    We now give the DFA condition, which is different from the characterisations in \cite[Theorem 1]{DBLP:journals/corr/Sinya15a}.
    This condition is very useful to construct the algorithms in the following section.
    (This condition can be proved via \cite[Theorem 1]{DBLP:journals/corr/Sinya15a}.
    However, in this paper, we give a proof more directly and simply.)
    \begin{lemma}\label{lemma : DFA cond}
      For any DFA $\mathcal{A} = (Q,A,\delta,q^0,F)$,
      $$\mu (L(\mathcal{A})) \neq 0 \iff \exists q \in F.( \Reachable(q^0,q) \land  \forall q' \in Q. (\Reachable(q,q') \to \Reachable(q',q)))$$
    \end{lemma}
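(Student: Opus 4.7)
The plan is to view the DFA $\mathcal{A}$ as inducing a uniform Markov chain $X_0 = q^0, X_1, X_2, \ldots$ on $Q$, where at each step one draws a letter $a \in A$ uniformly and sets $X_{n+1} = \delta(X_n, a)$. Under this identification $\mu_n(L(\mathcal{A})) = \Pr[X_n \in F]$. The right-hand side of the lemma asserts that some accepting state $q$ lies in a \emph{bottom} strongly connected component, i.e.\ an SCC $C$ satisfying $\forall q'.\,\Reachable(q, q') \imp \Reachable(q', q)$, which is itself reachable from $q^0$. I would prove the two directions separately.

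For the $(\Rightarrow)$ direction I would argue the contrapositive. Assume no accepting state lies in a reachable bottom SCC, and let $B$ be the union of all reachable bottom SCCs, so that $F$ is disjoint from $B$ on the reachable part of $Q$. From any reachable state outside $B$ there is a path of length at most $|Q|$ into $B$, giving a uniform lower bound $p \ge |A|^{-|Q|}$ on the probability of hitting $B$ within any window of $|Q|$ consecutive steps. Iterating yields $\Pr[X_n \notin B] \le (1-p)^{\lfloor n/|Q| \rfloor} \to 0$; since the chain cannot leave $B$ once inside and no state of $B$ is accepting, $\mu_n(L(\mathcal{A})) \le \Pr[X_n \notin B] \to 0$, so $\mu(L(\mathcal{A})) = 0$.

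For the $(\Leftarrow)$ direction, suppose $q \in F$ lies in a reachable bottom SCC $C$ via some fixed string $w$ of length $k \le |Q|$. With probability $|A|^{-k}$ the first $k$ random letters spell $w$; then $X_k \in C$, and because $C$ is bottom the chain remains in $C$ forever. The restriction of the chain to $C$ is irreducible, so the classical ergodic theorem for finite Markov chains gives that the Cesàro average of $\Pr[X_n = q \mid X_k \in C]$ converges to the (strictly positive) stationary weight $\pi_C(q)$. Combining, $\liminf_{n \to \infty} \delta_n(L(\mathcal{A})) \ge |A|^{-k}\,\pi_C(q) > 0$, hence $\delta(L(\mathcal{A})) \ne 0$, and Proposition \ref{prop : equiv} lets me conclude $\mu(L(\mathcal{A})) \ne 0$.

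The only nontrivial ingredient is the convergence of the Cesàro occupation frequencies inside the irreducible component $C$ to its stationary distribution. This is classical, but to keep the argument internal to the toolkit of the paper one can substitute Lynch's theorem (Theorem \ref{theorem : Lynch}): Lynch guarantees that each subsequence $\mu_{an+b}(L(\mathcal{A}))$ converges to some $l_b$, and a short pumping-style counting argument using the bottom SCC structure shows that at least one $l_b$ is positive under the right-hand condition, which forbids $\mu_n \to 0$. I expect this ergodic/pumping step to be the main technical point; the rest of the proof is direct.
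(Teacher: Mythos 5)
Your proposal is correct, but the two directions compare differently with the paper's proof. For $(\Rightarrow)$ you and the paper do essentially the same thing: a geometric-decay argument showing that the probability of avoiding the absorbing "bottom" region shrinks by a factor $1-|A|^{-|Q|}$ every $|Q|$ steps (the paper phrases this with the sets $R_q = \{q' \mid \Reachable(q',q)\}$ for each $q \in F$ rather than with the complement of the union $B$ of reachable sink SCCs, but the mechanism --- a set that is never re-entered once left, escapable by a string padded to length exactly $|Q|$ --- is identical). For $(\Leftarrow)$ you take a genuinely different route: you pass to the Ces\`aro density $\delta_n$, invoke the ergodic theorem for the irreducible chain on the sink SCC $C$ to get $\liminf_n \delta_n(L(\mathcal{A})) \ge |A|^{-k}\pi_C(q) > 0$, and then transfer back to $\mu$ via Proposition \ref{prop : equiv}. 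The paper instead argues elementarily and directly on $\mu_n$: by pigeonhole some $q' \in S_q$ carries mass at least $\mu_k(S_q)/|S_q| \ge |A|^{-|s_0|}/|S_q|$ at each time $k$, and pumping a string of length at most $|S_q|$ from $q'$ to $q$ shows $\mu_{n}(q)$ exceeds a fixed positive constant for infinitely many $n$, which already refutes $\mu(L(\mathcal{A}))=0$ without any appeal to Ces\`aro convergence. Your version is conceptually clean and buys a reusable probabilistic picture, at the cost of importing the classical ergodic theorem (legitimately external, but not in the paper's toolkit) and of there being no circularity issue only because Proposition \ref{prop : equiv} is proved independently of this lemma (it is, via Stolz--Ces\`aro and Lynch). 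One caveat: your fallback sketch via Lynch's theorem ("a short pumping-style counting argument shows some $l_b$ is positive") defers exactly the step that is the technical heart of the paper's $(\Leftarrow)$ direction, namely the pigeonhole-plus-pumping bound; if you want to avoid the ergodic theorem you should write that counting argument out, and it will essentially reproduce the paper's proof.
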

    $\mu (L(\mathcal{A})) \neq 0$ means that either the limit does not exist, or the limit exists and is not equal to $0$.
    \begin{proof}
      Let $\mu _n(q) = \frac{\{s \in A^n \mid \delta(q^0,s) = q\}}{|A|^n}$ and let $\mu _n(Q') = \sum_{q \in Q'} \mu _n(q)$.
      (Note that $\mu _n(L(\mathcal{A})) = \mu _n(F)$.)
      \begin{description}
        \item[$(\Rightarrow )$]
          We prove the contraposition.
          (i.e., if $\forall q \in F.( \Reachable(q^0,q) \to \exists q' \in Q.( \Reachable(q,q') \land  \lnot \Reachable(q',q)))$, then $\mu (L(\mathcal{A})) = 0$.)

          Let $R_q = \{q' \in Q \mid \Reachable(q',q)\}$.
          Then,
          \begin{align*}
            0 \le \mu _k(F)
            = \sum_{q \in F} \mu _k(q)
            \le \sum_{q \in F} \mu _k(R_q)
            \le& \sum_{q \in F} (1 - \frac{1}{|A|^{|Q|}}) \times \mu _{k-|Q|}(R_q) \tag{1}\\
            \le& \dots\\
            \le& \sum_{q \in F} (1 - \frac{1}{|A|^{|Q|}})^{\lfloor \frac{k}{|Q|} \rfloor} \times \mu _{(k \bmod |Q|)}(R_q) \tag{by using (1) repeatedly}\\
            \le& |F| \times (1 - \frac{1}{|A|^{|Q|}})^{\lfloor \frac{k}{|Q|} \rfloor}
          \end{align*}
          (1) is proved as follows.
          It is enough to prove that, for any $q'' \in R_q$, there exists a string $s'$ such that the length is $|Q|$ and $\delta(q'',s') \not\in R_q$.
          First, there exists a string $s_1'$ such that $\delta(q'',s_1') \not\in R_q$ by the assumption.
          We can assume that the length of $s_1'$ is at most $|Q|$ because the shortest length of string $s_1'$ satisfying $\delta(q'',s_1') \not\in R_q$ is at most $|Q|$.
          Second, $\delta(q'',s_1's_2') \not \in R_q$ for any string $s_2'$ by the definition of $R_q$. 
          Then, $s' = s_1's_2'$ satisfies the above condition by choosing a string $s_2'$ whose length is $|Q| - |s_1'|$.

          Hence, by that $\lim_{k \to \infty} |F| \times (1 - \frac{1}{|A|^{|Q|}})^{\lfloor \frac{k}{|Q|} \rfloor} = 0$ and the squeeze theorem, $\mu (L(\mathcal{A})) = \mu (F) = 0$.
        \item[$(\Leftarrow )$]
          Let $s_0$ be a string such that $\delta(q^0,s_0) = q$ and let $S_q$ be the SCC (Strongly Connected Component) containing $q$.
          Note that $S_q$ is a sink SCC by the assumption ($\forall q' \in Q.(\Reachable(q,q') \to \Reachable(q',q))$).
          Then, by that $S_q$ is a sink SCC, $\mu _k(S_q) \ge \frac{1}{|A|^{|s_0|}}$ for any $k \ge |s_0|$.
          By the pigeon hole principle and that $S_q$ is a sink SCC, for any $k \ge |s_0|$, there exists a state $q' \in S_q$ such that $\mu _k(q') \ge \frac{\mu _k(S_q)}{|S_q|}$.
          Let $s'$ be a string such that $\delta(q',s') = q$ and $|s'| \le |S_q|$ (note that we can reach $q$ from any state $q' \in S_q$ at most $|S_q|$ steps.).
          Then,
          \begin{align*}
            \mu _{k+|s'|}(q)
            \ge& \mu _k(q') \times \frac{1}{|A|^{|s'|}} \tag{by $\delta(q',s') = q$}\\
            \ge& \frac{\mu _k(S_q)}{|S_q|} \times \frac{1}{|A|^{|s'|}}
            \ge (\frac{1}{|A|^{|s_0|}} \times
            \frac{1}{|S_q|}) \times \frac{1}{|A|^{|S_q|}}
            \ge (\frac{1}{|A|^{|Q|}} \times
            \frac{1}{|Q|}) \times \frac{1}{|A|^{|Q|}}
          \end{align*}
          for any $k \ge |s_0|$.
          We can prove that $\mu (L(\mathcal{A})) = 0$ (i.e., $\forall \epsilon > 0.\exists N.\forall n > N. |\mu _n(F)| < \epsilon$) is not true by the above inequality.
          ($\epsilon = \frac{1}{|A|^{|Q|}} \times
          \frac{1}{|Q|} \times \frac{1}{|A|^{|Q|}}$ is a counter example.)
          Therefore, $\mu (L(\mathcal{A})) \neq 0$.
      \end{description}
    \end{proof}

    We now introduce the xor automatons of two DFAs. 
    \begin{definition}
      Let $\mathcal{A}_1 = (Q_1,A,\delta_1,q_1^0,F_1)$ and $\mathcal{A}_2 = (Q_2,A,\delta_2,q_2^0,F_2)$ be DFAs.
      Then, the xor automaton of $\mathcal{A}_1$ and $\mathcal{A}_2$, $\mathcal{A}_1 \oplus \mathcal{A}_2$, is the DFA $(Q_1 \times Q_2,A,\delta',(q_1^0,q_2^0), F')$, where
      \begin{enumerate}[(1)]
        \item $\delta'((q_1,q_2),a) = (\delta_1(q_1,a),\delta_2(q_2,a))$; and
        \item $F' = \{(q_1,q_2) \mid q_1 \in F_1 \text{ xor } q_2 \in F_2\}$.
      \end{enumerate}
    \end{definition}
    Then, the next proposition easily follows.
    \begin{proposition}
      For any DFAs $\mathcal{A}_1$ and $\mathcal{A}_2$,
      $L(\mathcal{A}_1 \oplus \mathcal{A}_2) = L(\mathcal{A}_1) \vartriangle L(\mathcal{A}_2)$.
    \end{proposition}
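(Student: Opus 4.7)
The plan is to reduce the equality of languages to the elementwise observation that acceptance in $\mathcal{A}_1 \oplus \mathcal{A}_2$ is governed by an exclusive-or of the acceptance conditions of $\mathcal{A}_1$ and $\mathcal{A}_2$. This is a routine product-construction argument, and I expect no real obstacle: the only lemma that needs care is the commutation of $\delta'$ with the componentwise transitions across arbitrary strings, which is an easy induction on string length.

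First, I would prove by induction on $|s|$ that for every $s \in A^*$ and every $(q_1,q_2) \in Q_1 \times Q_2$,
\begin{equation*}
\delta'((q_1,q_2),s) = (\delta_1(q_1,s),\delta_2(q_2,s)).
\end{equation*}
The base case $s = \epsilon$ is immediate from the convention $\delta(q,\epsilon) = q$ for each component and for $\delta'$. For the inductive step with $s = a s'$, applying the definition of $\delta'$ on the letter $a$ gives $\delta'((q_1,q_2),a) = (\delta_1(q_1,a),\delta_2(q_2,a))$, and then the inductive hypothesis applied to $s'$ from the state $(\delta_1(q_1,a),\delta_2(q_2,a))$ finishes it.

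Applying this identity at the initial state $(q_1^0,q_2^0)$, I get that $s \in L(\mathcal{A}_1 \oplus \mathcal{A}_2)$ iff $(\delta_1(q_1^0,s),\delta_2(q_2^0,s)) \in F'$, which by the definition of $F'$ is iff exactly one of $\delta_1(q_1^0,s) \in F_1$ and $\delta_2(q_2^0,s) \in F_2$ holds. Unfolding the definitions of $L(\mathcal{A}_1)$, $L(\mathcal{A}_2)$, and $\vartriangle$, this last condition is exactly $s \in L(\mathcal{A}_1) \vartriangle L(\mathcal{A}_2)$. Since this equivalence holds for every $s \in A^*$, we conclude $L(\mathcal{A}_1 \oplus \mathcal{A}_2) = L(\mathcal{A}_1) \vartriangle L(\mathcal{A}_2)$.
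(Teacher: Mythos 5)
Your proof is correct and is exactly the routine product-construction argument the paper has in mind; the paper itself omits the proof entirely, stating only that the proposition ``easily follows'' from the definition of $\mathcal{A}_1 \oplus \mathcal{A}_2$. The induction establishing $\delta'((q_1,q_2),s) = (\delta_1(q_1,s),\delta_2(q_2,s))$ followed by unfolding the xor acceptance condition $F'$ is precisely the intended justification.
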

    Moreover, note that we can construct $\mathcal{A}_1 \oplus \mathcal{A}_2$ from $\mathcal{A}_1$ and $\mathcal{A}_2$ in logarithmic space.

\section{The computational complexity upper bounds of $p$-equivalence problems} \label{section : upperbound}
  In this section, we show the computational complexity upper bounds of $p$-equivalence problems.
  In particular, in terms of the (fully) equivalence problems for REGs, some algorithms have already been developed.
  One approach is to transform two regular expressions into two equivalent NFAs by Meyer and Stockmeyer \cite[Proposition 4.11]{stockmeyer1974complexity}.
  We now give algorithms for the $p$-equivalence problems by using standard results from descriptive complexity \cite{immerman2012descriptive}.
  These algorithms are given by the condition in Lemma \ref{lemma : DFA cond}.
  We prove the next theorem.
  \begin{theorem} \label{theorem : in} \leavevmode
    \begin{enumerate}[1.]
      \item The $p$-equivalence problem for DFAs is in NL.
      \item The $p$-equivalence problem for unary DFAs is in L.
      \item The $p$-equivalence problem for NFAs is in PSPACE.
      \item The $p$-equivalence problem for unary NFAs is in coNP.
    \end{enumerate}
  \end{theorem}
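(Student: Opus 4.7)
The plan is to exploit Lemma \ref{lemma : DFA cond} applied to the xor automaton $\mathcal{A}_1 \oplus \mathcal{A}_2$, because $L_1 \simeq_p L_2$ iff $\mu(L(\mathcal{A}_1 \oplus \mathcal{A}_2)) = 0$. By the contrapositive of that lemma, the latter condition is
$$\forall q \in F'.\ \bigl(\lnot \Reachable((q_1^0, q_2^0), q)\ \lor\ \exists q' \in Q'.\ \Reachable(q, q') \land \lnot \Reachable(q', q)\bigr),$$
and the strategy for parts 1--3 is to express this in a logic of Theorems \ref{thm : NL}, \ref{thm : L}, or \ref{thm : PSPACE} and invoke the corresponding descriptive-complexity equality.

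For part 1, the xor automaton is constructible in logarithmic space, and the displayed formula uses only first-order quantifiers together with the transitive closure of the one-step transition relation. It therefore lies in $\mathrm{FO(TC)}$, and Theorem \ref{thm : NL} yields the NL bound. For part 2, when $|A|=1$ each state of any DFA has a unique outgoing transition, so the transition relation of $\mathcal{A}_1 \oplus \mathcal{A}_2$ is deterministic and $\Reachable$ is expressible with $\mathrm{DTC}$ rather than $\mathrm{TC}$; the formula then belongs to $\mathrm{FO(DTC)}$, and Theorem \ref{thm : L} gives the L bound.

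For part 3, each NFA is implicitly determinized via the subset construction, so a state of the (exponential-size) xor DFA is a pair of subsets of $Q_1$ and $Q_2$. These subsets and the one-step transition between them are definable by second-order predicates over $Q_1 \cup Q_2$, and $\Reachable$ becomes a second-order transitive closure. The formula then sits in $\mathrm{SO(TC)}$, and Theorem \ref{thm : PSPACE} delivers the PSPACE upper bound.

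Part 4 is the main obstacle because it falls outside the descriptive-complexity template used above: even for unary NFAs, the subset construction is exponential, so $\mathrm{SO(TC)}$ only yields PSPACE, not coNP. The plan is to step around this by invoking Proposition \ref{proposition : unary finite prob}(3), which for $|A|=1$ collapses $p$-equivalence to $f$-equivalence, reducing the task to deciding whether $L(\mathcal{A}_1) \vartriangle L(\mathcal{A}_2)$ is finite. I would then exploit the Chrobak-normal-form structure of unary NFAs, in which each automaton becomes a polynomially bounded tail together with a disjoint union of simple cycles whose individual lengths are bounded by the state count. A ``No'' instance admits a succinct NP witness consisting of a chosen cycle length and a residue class certifying that one automaton accepts infinitely many lengths in the progression while the other does not; this can be checked in polynomial time by cycle-by-cycle reachability analysis modulo the chosen period, placing the complement in NP and thus the problem in coNP. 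The delicate point will be making the witness genuinely polynomial in size while accounting for the fact that Landau-type periods of the equivalent DFA are exponential.
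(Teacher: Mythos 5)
Parts 1--3 of your proposal follow the paper's proof essentially verbatim: form the xor automaton, apply the condition of Lemma \ref{lemma : DFA cond}, and invoke $\mathrm{FO(TC)}=\mathrm{NL}$, $\mathrm{FO(DTC)}=\mathrm{L}$, and $\mathrm{SO(TC)}=\mathrm{PSPACE}$; working with the negated formula rather than the paper's positive one is harmless since all three logics are closed under negation.

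Part 4 begins the same way --- collapsing $\simeq_p$ to $\simeq_f$ for $|A|=1$ via Proposition \ref{proposition : unary finite prob} is exactly the paper's first step --- but then diverges, and the divergence contains a genuine gap. The witness you describe, a single cycle length together with a residue class such that one automaton accepts the whole progression while the other accepts only finitely many lengths in it, does not certify infiniteness of $L(\mathcal{A}_1)\vartriangle L(\mathcal{A}_2)$. Take $L_1=L((00)^*)$ and $L_2=L((0000)^*)$: the symmetric difference $\{0^{4k+2}\mid k\ge 0\}$ is infinite, yet for the only available cycle and residue of $\mathcal{A}_1$ (period $2$, residue $0$) the other automaton \emph{does} accept infinitely many lengths of that progression. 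In general the symmetric difference can be confined to a residue class modulo the least common multiple of several cycle lengths, which is exponential, so the witness must pin down a residue modulo that lcm --- that is, it must be a full binary-encoded length, at which point the Chrobak normal form buys you nothing. You flag this as ``the delicate point'' but do not resolve it. The paper's route is more elementary and avoids Chrobak normal form entirely: the sequence of pairs of reachable-state vectors $v_k=(A_1^k\cdot e_1,A_2^k\cdot e_1)$ is eventually periodic with preperiod and period at most $2^{|Q_1|+|Q_2|}$, so $L(\mathcal{A}_1)\vartriangle L(\mathcal{A}_2)$ is infinite iff it contains some $0^n$ with $2^{|Q_1|+|Q_2|}\le n<2^{1+|Q_1|+|Q_2|}$ (Lemma \ref{lemma : unary NFA}); one guesses the polynomially many bits of such an $n$ and checks membership on both sides by repeated squaring of the adjacency matrices. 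If you replace your cycle-and-residue witness by such a binary-encoded $n$ (whose membership is then checkable in polynomial time with or without Chrobak normal form), your argument closes and coincides with the paper's.
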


  \begin{proof} \leavevmode
    \begin{enumerate}[1.]
      \item
        We first give a reduction from a DFA to a first-order structure.
        Let $\mathcal{M}^\mathcal{A} = \<Q,\{R_a\}_{a \in A},R_-,q^0,F\>$ be the first-order structure corresponding to a DFA $\mathcal{A} = (Q,A,\delta,q^0,F)$, where
        \begin{inparaenum}[(1)]
          \item $R_a \subseteq Q^2$ is a binary relation such that $(q_1,q_2) \in R_a \iff \delta(q_1,a) = q_2$ for any $a \in A$; and
          \item $R_- \subseteq Q^2$ is a binary relation such that $(q_1,q_2) \in R_- \iff \exists a \in A. (q_1,q_2) \in R_a$.
        \end{inparaenum} 
        (Note that we can construct $\mathcal{M}^{\mathcal{A}}$ from $\mathcal{A}$ in logarithmic space.)

        Let $\mathcal{A}_1 = (Q_1,A,\delta_1,q_1^0,F_1)$ and $\mathcal{A}_2 = (Q_2,A,\delta_2,q_2^0,F_2)$ be two given DFAs.
        Then, the first-order structure $\mathcal{M}^{\mathcal{A}_1 \oplus \mathcal{A}_2}$ can be constructed in logarithmic space.
        The DFA condition in Lemma \ref{lemma : DFA cond}, $\exists q \in F'. \Reachable(q^0,q) \land  \forall q' \in Q_1 \times Q_2. \Reachable(q,q') \to \Reachable(q',q)$, can be written in FO(TC) as $\exists q.(F(q) \land  R_-^*(q^0,q) \land  \forall q'.(R_-^*(q,q') \to R_-^*(q',q)))$, where $R_-^*$\footnote{$R_-^*(q,q')$ denotes $TC(R_-)(q,q') \lor  q = q'$.} is the reflective transitive closure of $R_-$.
        Thus, by NL = FO(TC) (Theorem \ref{thm : NL}), the $p$-equivalence problem for DFAs is in NL.
      \item
        In the case of $|A| = 1$, the sentence written in FO(TC), $\exists q.(F(q) \land  R_-^*(q^0,q) \land  \forall q'.(R_-^*(q,q') \to R_-^*(q',q)))$, is also written in FO(DTC)
        because $R_-$ is deterministic by that $\mathcal{A}_1 \oplus \mathcal{A}_2$ is also unary DFA.
        Therefore, by L = FO(DTC) (Theorem \ref{thm : L}), the $p$-equivalence problem for unary DFAs is in L.
      \item
        Let $\mathcal{A}_1 = (Q_1,A,\delta_1,q_1^0,F_1)$ and $\mathcal{A}_2 = (Q_2,A,\delta_2,q_2^0,F_2)$ be two given NFAs.
        Then, we construct a second-order structure from these NFAs.
        Let $\mathcal{M}^{\mathcal{A}_1 \oplus \mathcal{A}_2} = \<Q_1 \uplus Q_2,\{R_a\}_{a \in A},R_-,Q^0,F'\>$ be the second-order structure, where
        \begin{inparaenum}[(1)]
          \item $R_a \subseteq \wp(Q_1 \uplus Q_2)^2$ is a binary second-order relation such that $(Q',Q'') \in R_a \iff \delta_1(Q' \cap Q_1,a) \cup \delta_2(Q' \cap Q_2,a) = Q''$ for any $a \in A$;
          \item $R_- \subseteq \wp(Q_1 \uplus Q_2)^2$ is a binary second-order relation such that $(Q',Q'') \in R_- \iff \exists a. (Q',Q'') \in R_a$;
          \item $Q^0 = \{q_1^0,q_2^0\}$; and
          \item $F' \subseteq \wp(Q_1 \uplus Q_2)$ is a unary second-order relation such that $Q' \in F' \iff (\exists q_1 \in Q'\cap Q_1. q_1 \in F_1) \text{ xor } (\exists q_2 \in Q'\cap Q_2. q_2 \in F_2)$.
        \end{inparaenum}
        (Note that we can construct $\mathcal{M}^{\mathcal{A}_1 \oplus \mathcal{A}_2}$ from $\mathcal{A}_1$ and $\mathcal{A}_2$ in polynomial space.)
        This structure corresponds to the xor automaton of the two DFAs given by powerset construction of these NFAs.

        Then, the DFA condition in Lemma \ref{lemma : DFA cond} can be written in SO(TC) as $\exists Q.(F(Q) \land  R_-^*(Q^0,Q) \land  \forall Q'.(R_-^*(Q,Q') \to R_-^*(Q',Q)))$, where $R_-^*$ is the reflective transitive closure of $R_-$.
        Therefore, by PSPACE = SO(TC) (Theorem \ref{thm : PSPACE}), the $p$-equivalence problem for NFAs is in PSPACE.
      \item In this case, we give a coNP algorithm for the $p$-equivalence problem directly because it may be easier than using Fagin's Theorem \cite{immerman2012descriptive}.

      Let $A$ be the $n \times n$ adjacency matrix generated from a unary NFA $\mathcal{A} = (\{1,\dots,n\},\{0\},\delta,1,F)$.
      More precisely, $A$ is an adjacency matrix such that
      \begin{inparaenum}[(1)]
        \item $(A)_{i,j} = 1$ if $j \in \delta(i,0)$, and
        \item $(A)_{i,j} = 0$ if $j \not\in \delta(i,0)$.
      \end{inparaenum}
      It is immediate that
      $0^n \in L(\mathcal{A})$ if and only if there exists a number $j \in F$ such that $(A^n)_{1,j} = 1$.
      The following algorithm (Algorithm \ref{algorithm : almost equivalence over unary NFA}) is based on the next lemma.
      \begin{lemma}\label{lemma : unary NFA}
        For any unary NFAs, $\mathcal{A}_1$ and $\mathcal{A}_2$,
        $L(\mathcal{A}_1) \not\simeq_p L(\mathcal{A}_2)$ $\iff$ there exists $n$ such that
        \begin{enumerate}[1.]
          \item $2^{|Q_1| + |Q_2|} \le n < 2^{1 + |Q_1| + |Q_2|}$; and
          \item $0^n \in L(\mathcal{A}_1) \vartriangle L(\mathcal{A}_2)$.
        \end{enumerate}
      \end{lemma}
      \begin{proof}\leavevmode
        Note that $L(\mathcal{A}_1) \simeq_p L(\mathcal{A}_2)$ if and only if $L(\mathcal{A}_1) \simeq_f L(\mathcal{A}_2)$ by that these NFAs are unary NFAs and Proposition \ref{proposition : unary finite prob}.
        Then, it is enough to prove that $L(\mathcal{A}_1) \vartriangle L(\mathcal{A}_2)$ is a infinite set if and only if there exists $n$ such that
        \begin{inparaenum}[(1)]
          \item $2^{|Q_1| + |Q_2|} \le n < 2^{1 + |Q_1| + |Q_2|}$; and
          \item $0^n \in L(\mathcal{A}_1) \vartriangle L(\mathcal{A}_2)$.
        \end{inparaenum}
        Let $v_k = (A_1^k \cdot e_1, A_2^k \cdot e_1)$, where $A_1$ and $A_2$ are the adjacency matrices generated from $\mathcal{A}_1$ and $\mathcal{A}_2$, respectively; and $e_1$ is the unit vector $(1,0,\dots,0)$.
        It is immediate that, for any $k \ge 2^{|Q_1| + |Q_2|}$, $v_k$ occurs infinitely in the sequence $\{v_k\}_{k = 0}^\infty$ because the number of the pattern of $v_k$ is at most $2^{|Q_1| + |Q_2|}$.
        Moreover, for any $v$ occurring infinitely in the sequence $\{v_k\}_{k = 0}^\infty$, there exists $k'$ such that $2^{|Q_1| + |Q_2|} \le k' < 2 \times 2^{|Q_1| + |Q_2|}$ and $v = v_{k'}$ because the period of the sequence $\{v_k\}_{k = 0}^\infty$ is at most $2^{|Q_1| + |Q_2|}$.
        Hence, this Lemma is proved.
      \end{proof}
  
      Then, we give an algorithm (Algorithm \ref{algorithm : almost equivalence over unary NFA}) to search a number $n$ such that satisfies the condition 1 and the condition 2 in Lemma \ref{lemma : unary NFA}.
      Nondeterministically ``guess'' the binary representation of $n$, and test whether there is a path in the adjacency matrix of $A_1$ and $A_2$ of length $n$ to accepting states.
      This idea is based on \cite[Theorem 6.1]{meyer1973word} that states that the equivalence problem for unary NFAs is in coNP.
      The algorithm runs in nondeterministically polynomial time.

      \begin{algorithm}[ht]
        \caption{$p$-equivalence Problem for unary NFA}
        \begin{algorithmic}
          \ENSURE $L(\mathcal{A}_1) \simeq_p  L(\mathcal{A}_2)$? ($True$ or $False$)
          \STATE $(A_1',A_2') \Leftarrow (A_1,A_2)$, where $A_1$ and $A_2$ are the adjacency matrices generated from two unary NFAs, $\mathcal{A}_1$ and $\mathcal{A}_2$, respectively.
          \STATE $d \Leftarrow 1$
          \WHILE{$d < 1 + |Q_1| + |Q_2|$}
            \STATE $(A_1',A_2') \Leftarrow (A_1' \times A_1',A_2' \times A_2')$ or $(A_1',A_2') \Leftarrow (A_1' \times A_1' \times A_1,A_2' \times A_2' \times A_2)$ (nondeterministically)
            \STATE $d \Leftarrow d + 1$
          \ENDWHILE
          \IF{$(\exists j.(A_1')_{1,j} = 1)$ xor $(\exists j.(A_2')_{1,j} = 1)$}
          \STATE return $False$
          \ELSE
          \STATE return $True$
          \ENDIF
        \end{algorithmic}
        \label{algorithm : almost equivalence over unary NFA}
      \end{algorithm}
      In Algorithm \ref{algorithm : almost equivalence over unary NFA}, if any process in the algorithm returns $True$, it is shown that $L(\mathcal{A}_1) \simeq_p L(\mathcal{A}_2)$.
      Otherwise (i.e., if there exists a process such that returns $False$), it is shown that $L(\mathcal{A}_1) \not\simeq_p L(\mathcal{A}_2)$.

      Therefore, the $p$-equivalence problem for unary NFAs is in coNP.
    \end{enumerate}
  \end{proof}

  \subsection{Some generalized equivalence problems}
    We conclude this section with a result for some generalized equivalence problems.
    \begin{corollary}\label{cor : in general}
      Let $x$-equivalence problem be an equivalence problem satisfying that 
      the $x$-equivalence problem for DFAs is logarithmic space reducible 
      to the $\Phi_x$-model-checking problem (i.e, the problem to decide whether $\mathcal{M}$ satisfies $\Phi_x$ for a given model $\mathcal{M}$, where $\Phi_x$ is a first-order sentence with transitive closure).
      Then,
      \begin{enumerate}[1.]
        \item The $x$-equivalence problem for DFAs is in NL.
        \item The $x$-equivalence problem for unary DFAs is in L.
        \item The $x$-equivalence problem for NFAs is in PSPACE.
      \end{enumerate}
    \end{corollary}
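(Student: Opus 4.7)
The plan is to follow the three-part template of Theorem \ref{theorem : in} very closely, using the hypothesis -- a log-space reduction from DFA $x$-equivalence to an FO(TC) model-checking instance -- as a black box for parts 1 and 2, and lifting it to the second-order setting for part 3.

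\emph{Part 1 (DFAs in NL).} Compose the given log-space reduction with the NL model-checking algorithm for FO(TC) guaranteed by Theorem \ref{thm : NL}. Since NL is closed under log-space reductions, DFA $x$-equivalence lies in NL. Nothing new is needed beyond unfolding the definitions.

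\emph{Part 2 (unary DFAs in L).} Specialize the same reduction to unary DFAs. The transition-type binary relations appearing in the constructed first-order structure $\mathcal{M}$ are now functional (each state has a unique successor under the single letter), so every TC in $\Phi_x$ whose argument is such a relation coincides with DTC, exactly as in the FO(TC)-to-FO(DTC) step of Theorem \ref{theorem : in}.2. Thus $\Phi_x$ may be reread as an FO(DTC) sentence over the restricted class of structures arising from unary DFAs, and the result follows from L = FO(DTC) (Theorem \ref{thm : L}).

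\emph{Part 3 (NFAs in PSPACE).} Here the DFA-level reduction must be lifted through the implicit powerset construction, mirroring Theorem \ref{theorem : in}.3. Given two NFAs $\mathcal{A}_1,\mathcal{A}_2$, build a second-order structure $\mathcal{M}'$ whose first-order domain is $Q_1 \uplus Q_2$ but whose relations are second-order: subsets of $Q_1 \uplus Q_2$ play the role of states of the DFA obtained by powerset construction. Translate the given FO(TC) sentence $\Phi_x$ (which talks about the DFA encoding) into an SO(TC) sentence $\Phi_x'$ by promoting each first-order variable ranging over DFA states to a second-order set variable ranging over subsets of $Q_1 \uplus Q_2$, and each FO transitive closure to an SO transitive closure of the corresponding second-order relation. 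Then $\mathcal{M}' \models \Phi_x'$ iff the two NFAs are $x$-equivalent, and PSPACE = SO(TC) (Theorem \ref{thm : PSPACE}) closes the argument.

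\emph{Main obstacle.} The delicate step is Part 3: one must verify that the syntactic promotion of $\Phi_x$ to $\Phi_x'$ is semantics-preserving, i.e.\ that $\mathcal{M}'$ models $\Phi_x'$ precisely when the exponentially-large powerset DFA product satisfies $\Phi_x$. The proof of Theorem \ref{theorem : in}.3 already carries out this promotion for the specific FO(TC) formula expressing the DFA condition of Lemma \ref{lemma : DFA cond}, and the same recipe generalizes uniformly to any FO(TC) sentence. Thus the principal work is bookkeeping for the translation rather than any genuinely new structural argument.
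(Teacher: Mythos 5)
Your proposal is correct and matches the paper's intended argument: the paper states this corollary without an explicit proof, relying on exactly the template of Theorem \ref{theorem : in} (compose the hypothesized log-space reduction with NL = FO(TC), specialize TC to DTC in the unary case, and promote to SO(TC) over the powerset structure for NFAs), which is precisely what you do. Your explicit flagging of the FO-to-SO promotion in Part 3 as the delicate step is a fair reading of what the paper leaves implicit in its own Theorem \ref{theorem : in}.3.
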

    For example, $f$-equivalence \cite{Badr:2008:HO:1428728.1428753,ITA:8238099} and $E$-equivalence \cite{DBLP:conf/dlt/HolzerJ12} satisfy the condition of $x$-equivalence, where $E$ is a finite set. 
    The DFA conditions of these equivalences can be easily written in a first-order sentence with transitive closure.

\section{The computational complexity lower bounds of $p$-equivalence problems} \label{section : lowerbound}
  In this section, we show the computational complexity lower bounds of $p$-equivalence problems.

  \begin{theorem}\label{thm : hard}\leavevmode
    \begin{enumerate}[1.]
      \item The $p$-equivalence problem for DFAs is NL-hard.
      \item The $p$-equivalence problem for unary REGs is coNP-hard.
      \item The $p$-equivalence problem for REGs is PSPACE-hard.
    \end{enumerate}
  \end{theorem}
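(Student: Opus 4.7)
The plan is to adapt the well-known hardness reductions for the ordinary equivalence problems: DFA equivalence is NL-complete \cite{jones1975space}, REG equivalence is PSPACE-complete \cite{meyer1973word,stockmeyer1974complexity}, and unary REG equivalence is coNP-complete \cite{meyer1973word}. The central obstacle is that a single string in $L_1 \vartriangle L_2$ certifies ordinary non-equivalence but has zero asymptotic density, so it does not by itself rule out $L_1 \simeq_p L_2$. Each reduction therefore needs an amplification gadget turning a single witness of non-equivalence into a symmetric difference of positive density.

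For parts 1 and 3 I would use a ``$\#$-padding'' reduction. Introduce a fresh symbol $\#$ not in $A$ and, from an input machine $\mathcal{A}_i$ (resp.\ expression $\alpha_i$) over $A$, build a new DFA (resp.\ REG) over $A \cup \{\#\}$ accepting $L_i \cup (A \cup \{\#\})^{*} \cdot \# \cdot L_i$, i.e., strings whose suffix after the last $\#$ (or the whole string if no $\#$ occurs) lies in $L_i$. For DFAs this is a logarithmic-space construction: add a $\#$-transition from every state back to the initial state. For REGs it is the displayed polynomial-size expression. The reduction is faithful because $L_1 = L_2$ trivially gives $L_1'' = L_2''$, while any $w \in L_1 \vartriangle L_2$ of length $k$ forces the cone $(A \cup \{\#\})^{*} \cdot \# \cdot \{w\}$ to lie in $L_1'' \vartriangle L_2''$; this cone contains $(|A|+1)^{n-k-1}$ of the $(|A|+1)^{n}$ strings of each length $n \ge k+1$, so $\mu_n(L_1'' \vartriangle L_2'') \ge (|A|+1)^{-k-1}$ eventually and hence $\mu(L_1'' \vartriangle L_2'') \ne 0$. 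Coupling this with the known NL/PSPACE hardness of DFA/REG equivalence yields parts 1 and 3.

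For part 2 the above padding is unavailable because the alphabet is unary, so I would first invoke Proposition \ref{proposition : unary finite prob}(3), by which $\simeq_p$ coincides with $\simeq_f$ over a unary alphabet, and then revisit the classical Stockmeyer--Meyer reduction from $\mathrm{3SAT}$ to unary REG inequivalence. That reduction builds from a 3CNF $\phi$ on $n$ variables a polynomial-size unary REG $\alpha_\phi$ with $L(\alpha_\phi) \subseteq a^{*}$, obtained as a finite union of arithmetic progressions via the first $n$ primes and the CRT, such that $L(\alpha_\phi) = a^{*}$ iff $\phi$ is unsatisfiable. The key observation -- and the main technical point I expect to need to verify carefully -- is that whenever $L(\alpha_\phi) \ne a^{*}$, the difference $a^{*} \setminus L(\alpha_\phi)$ is itself a non-empty union of arithmetic progressions and is therefore infinite. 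Combined with Proposition \ref{proposition : unary finite prob}(3), this gives $\phi$ unsatisfiable iff $L(\alpha_\phi) \simeq_p a^{*}$, a polynomial-time reduction from UNSAT that yields coNP-hardness.
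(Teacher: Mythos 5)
Your proposal is correct, but for parts 1 and 3 it takes a genuinely different route from the paper. The paper does not reduce from the equivalence problem as a black box; instead it re-derives the classical hardness constructions and builds the density amplification into them: for DFAs it redoes the GAP reduction of Jones with the target node $n$ made absorbing, so that reachability of $n$ yields a whole cone $s'A^*$ of accepted strings and hence $L(\mathcal{A}_G)\not\simeq_p\emptyset$; for REGs it modifies the Hunt--Rosenkrantz--Szymanski expression $\alpha_M^s$ for a linear-bounded machine so that every extension of an accepting-computation encoding still lies outside $L(\alpha_M^s)$, giving $\mu_{n'}(L(\alpha_M^s)\vartriangle A^*)\ge |A|^{-|s'|}$ and thus $L(\alpha_M^s)=A^*\iff L(\alpha_M^s)\simeq_p A^*$. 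Your $\#$-padding gadget achieves the same amplification generically: $u\#w\in L_i''$ iff the suffix $w$ after the last $\#$ is in $L_i$, so a single witness $w\in L_1\vartriangle L_2$ forces $\mu_n(L_1''\vartriangle L_2'')\ge(|A|+1)^{-|w|-1}$ for all $n\ge|w|+1$, and the construction is log-space for DFAs and polynomial for REGs; this is sound and arguably cleaner, since it exhibits a many-one reduction from equivalence to $p$-equivalence (indeed $L_1''=L_2''\iff L_1''\simeq_p L_2''$, so it would equally serve the paper's Corollary 5.2 on generalized $x$-equivalences). What the paper's in-situ approach buys in exchange is that it keeps the alphabet unchanged and produces instances of the special forms $L\simeq_p A^*$ and $L\simeq_p\emptyset$, which are then reused almost verbatim for the zero-one-law hardness results of Section 6. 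For part 2 your argument coincides with the paper's: both invoke Proposition 3.1(3) to identify $\simeq_p$ with $\simeq_f$ over a unary alphabet and observe that, by the periodicity modulo $\prod_k p_k$ built into the Stockmeyer--Meyer expression, $a^*\setminus L(\alpha_\phi)$ is nonempty only if it is infinite; the point you flag as needing careful verification is exactly the one the paper checks.
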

  
  \begin{proof}
    \begin{enumerate}[1.]
      \item
        We reduce the GAP (Graph Accessibility Problem) to these problems, where $GAP = \{G \mid \text{is an $n \times n$ adjacency matrix that has a path from node $1$ to node $n$}\}$.
        (This proof is based on \cite[Theorem 26]{jones1975space}.)
        Note that GAP is NL-hard \cite{jones1975space}.
        We define the DFA $\mathcal{A}_G = (\{-1,1,\dots,n\},\{1,\dots,n\},\delta,1,$ $\{n\})$, where
        \begin{inparaenum}[(1)]
          \item $\delta(i,j) = j$ if $(i,j)$ is an edge of $G$ and $1 \le i < n$;
          \item $\delta(n,j) = n$; and
          \item $\delta(i,j) = -1$ for all other values of $i,j$.
        \end{inparaenum}
        In this reduction, once you visit at $n$, you will not get out from $n$.
        Then, it is immediate that $G \in GAP \iff L(\mathcal{A}_G) \not\simeq_p \emptyset$ and note that this reduction is in logarithmic space.
        Hence, the $p$-equivalence problem for DFAs is coNL-hard.
        By NL = coNL \cite{immerman1988nondeterministic,szelepcsenyi1988method}, the $p$-equivalence problem is also NL-hard.
      \item
        This part can be solved by the same reduction as \cite[Theorem 6.1]{meyer1973word}.
        This is a reduction from the complement of the equivalence problem to 3-SAT. Note that 3-SAT is a well-known NP-hard problem \cite{cook1971complexity}.
        Let the regular expression $E$ and the $k$-th prime number $p_k$ be the same as \cite[Theorem 6.1]{meyer1973word}.
        Intuitively, a string $0^i$ corresponds to an assignment in 3-SAT whose $k$ th variable is True[False] if and only if $i \equiv 1[0] (\textrm{mod } p_k)$
        and $E$ corresponds to a given formula.
        $0^i \not\in L(E)$ means that the assignment corresponding to $0^i$ satisfies the formula corresponding to $E$.

        Then, we can easily show that $L(E) = A^* \iff L(E) \simeq_f A^*$ because, for any two numbers, $i_1$ and $i_2$, such that $i_1 \equiv i_2 (\textrm{mod } \prod_{k = 1}^n p_k)$, $0^{i_1} \in L(E) \iff 0^{i_2} \in L(E)$ holds.
        Therefore, by Proposition \ref{proposition : unary finite prob}, $L(E) = A^* \iff L(E) \simeq_p A^*$.
        Hence, the $p$-equivalence problem for unary REGs is coNP-hard.
      \item
        It is enough to prove that the $p$-equivalence problem for REGs is NLINSPACE-hard because a language that is CSL-hard (i.e, NLINSPACE-hard) is also PSPACE-hard \cite[Lemma 1.10.(1)]{hunt1976equivalence}.
        The reduction of this proof is based on \cite[Proposition 2.4]{hunt1976equivalence}, which is about that the equivalence problem for REGs is PSPACE-hard.
        Intuitively, in these two reductions, a regular expression $\alpha_M^s$ corresponds to a given nondeterministic linear-space bounded Turing machine $M$ and a given input string $s$ and a string $s' \not\in L(\alpha_M^s)$ corresponds to an accepting sequence of $M$ on input $s$.

        Let $M = (Q,A_M,\delta,q^0,q^a)$ be a nondeterministic linear-space bounded Turing machine and $s = a_1 \dots a_n$ be an input string, where
        \begin{inparaenum}[(1)]
          \item $Q$ is a finite set of states;
          \item $A_M$ is a finite alphabet, where $A_M$ always contains the blank symbol $\bl$;
          \item $\delta : Q \times A_M \to \wp(Q \times A_M \times \{L,R\})$ is a transition function;
          \item $q^0 \in Q$ is the initial state; and
          \item $q^a \in Q$ is the acceptance state.
        \end{inparaenum}
        We also require that once the machine enters its acceptance states, it never leaves it.
        $M$ accepts an input $s$ if the machine can reach an acceptance state $q^a$ from the initial configuration (i.e, the header is at the leftmost position, the state is $q^0$, and the tape is $a_1 \dots a_n$) by finitely transitions.
        Then, we construct the REG $\alpha_M^s = \alpha_1 \cup \alpha_2 \cup \alpha_3$ as follows\footnote{A finite set $\{s_1,\dots,s_n\}$ denotes the regular expression $s_1 \cup \dots \cup s_n$ and $A \setminus c$ denotes $A \setminus \{c\}$.};
        \begin{enumerate}
          \item $A = \{\#\} \cup A_M \cup (Q \times A_M)$,
          \item (input error)
            $\alpha_1 = ((A \setminus \#) \cup \#((A \setminus (q^0,a_1)) \cup (q^0,a_1)((A \setminus a_2) \cup a_2((A \setminus a_3) \cup a_3(\dots))))) A^*$,
          \item (acceptance error) $\alpha_2 = (A \setminus (\bigcup \{q^a\} \times A_M))^*$,
          \item transition error) $\alpha_3 = \bigcup_{c_1,c_2,c_3 \in A} (A \setminus (\bigcup \{q^a\} \times A_M))^* c_1 c_2 c_3 A^{n-2} (A^3 \setminus f_M(c_1,c_2,c_3)) A^*$, and
          \item $f_M : A^3 \to \wp(A^3)$ is the transition function for $M$. Formally, each $f_M(c_1,c_2,c_3)$ is the smallest set that satisfies the following conditions:
            \begin{enumerate}[(i.)]
              \item If $c_1 = (q,a_1)$, $c_2 = a_2$, and $(q',a_1',R) \in \delta(q,a_1)$, then $(a_1',(q',a_2),c_3) \in f_M(c_1,c_2,c_3)$;
              \item If $c_1 = (q,a_1)$ and $(q',a_1',L) \in \delta(q,a_1)$, then $(a_1',c_2,c_3) \in f_M(c_1,c_2,c_3)$;
              \item If $c_2 = (q,a_2)$, $c_3 = a_3$, and $(q',a_2',R) \in \delta(q,a_2)$, then $(c_1,a_2',(q',a_3)) \in f_M(c_1,c_2,c_3)$;
              \item If $c_2 = (q,a_2)$, $c_1 = a_1$, and $(q',a_2',L) \in \delta(q,a_2)$, then $((q',a_1),a_2',c_3) \in f_M(c_1,c_2,c_3)$;
              \item If $c_3 = (q,a_3)$, $c_2 = a_2$, and $(q',a_3',L) \in \delta(q,a_3)$, then $(c_1,(q',a_2),a_3') \in f_M(c_1,c_2,c_3)$;
              \item If $c_3 = (q,a_3)$ and $(q',a_3',R) \in \delta(q,a_3)$, then $(c_1,c_2,a_3') \in f_M(c_1,c_2,c_3)$;
              \item If $c_1 = a_1$, $c_2 = a_2$, and $c_3 = a_3$, then $(c_1,c_2,c_3) \in f_M(c_1,c_2,c_3)$.
            \end{enumerate}
        \end{enumerate}
        Note that the regular expression $\alpha_M^s$ can be constructed in polynomial time.
        Then, we prove the next Lemma. This Lemma gives a relationship between $L(\alpha_M^s)$ and acceptance runs of $M$ on the input $s$.
        \begin{lemma}\label{lemma : correspond}
          For any regular expression $\alpha_M^s$ constructed in the above manner and for any string $s'$, 
          $s' \not\in L(\alpha_M^s)$ if and only if $s'$ is in the form of $$\#(q^0,a_1^0) \dots a_n^0\# \dots \#a_1^i \dots (q^i,a_{k_i}^i) \dots a_n^i\# \dots \# a_1^m \dots (q^m,a_{k_m}^m) c_{m+1} \dots c_l$$, where
          \begin{inparaenum}
            \item $s = a_1^0 \dots a_n^0$;
            \item $q^0$ is the initial state in $M$;
            \item $q^m$ is the acceptance state in $M$; and
            \item for each $i$ ($1 \le i < m$), $\#a_1^i \dots (q^i,a_{k_i}^i) \dots a_n^i$ denotes the $i$ th configuration (i.e., in step $i$, each $j$-th ($1 \le j \le n$) character is $a_j^i$, the state is $q^i$, and the header is at the $k_i$-th position) and this configuration is obtained from the $i-1$ th configuration by a transition.
          \end{inparaenum}
        \end{lemma}
        \begin{proof}\leavevmode
          \begin{description}
            \item[(only if)]
              \begin{inparaenum}
                \item and \item are followed by (input error);
                \item (i.e., $q^a$ occurs in $s'$) is followed by (acceptance error);
                \item is followed by (transition error).
              \end{inparaenum}
            \item[(if)]
              First, $s' \not \in L(\alpha_1)$ is followed by that $s'$ is form of $\#(q^0,a_1^0) \cdots a_n^0 \cdots$.
              Second, $s' \not \in L(\alpha_2)$ is followed by that $q^a$ occurs in $s'$.
              Third, $s' \not \in L(\alpha_3)$ is followed by that $s'$ represents valid configurations until $q^a$ does not occur in $s'$.
              Therefore, $s' \not \in L(\alpha_M^s)$.
          \end{description}
        \end{proof}
        It is immediate that any $s'$ satisfying the conditions in Lemma \ref{lemma : correspond} corresponds to an acceptance run of $M$ on the input $s$; and, for any acceptance run of $M$ on the input $s$, there exists a string $s'$ such that satisfies the conditions in Lemma \ref{lemma : correspond}.
        Then, we can prove the next Lemma.
        \begin{lemma}\label{lemma : regular expression and turing machine}
          For any nondeterministic linear-space bounded Turing machine $M$ and for any string $s$,
          the following three conditions are equivalent.
          \begin{enumerate}
            \item $M$ does not accept the input $s$.
            \item $L(\alpha_M^s) = A^*$.
            \item $L(\alpha_M^s) \simeq_p A^*$.
          \end{enumerate}
        \end{lemma}
        \begin{proof}\leavevmode
          (a) $\Leftrightarrow$ (b) is followed by Lemma \ref{lemma : correspond} and the above consideration.
          (b) $\Rightarrow $ (c) is easily followed by $\operatorname{=} \subseteq \operatorname{\simeq}_p$.
          We only prove (c) $\Rightarrow $ (b). We prove the contraposition.
         
          When $L(\alpha_M^s) \neq A^*$, let $s'$ be a string not in $L(\alpha_M^s)$.
          It is immediate that, for any string $s''$, $s's''$ is also in the form of $\#(q^0,a_1^0) \dots a_n^0\# \dots \#a_1^i \dots (q^i,a_{k_i}^i) \dots a_n^i\# \dots \# a_1^m \dots (q^m,a_{k_m}^m) c_{m+1} \dots c_l$.
          (Note that any string matches $c_{m+1} \dots c_l$.)
          
          Therefore, $\mu_{n'}(L(\alpha_M^s)) \le 1 - \frac{1}{|A|^{|s'|}}$ and $\mu_{n'}(L(\alpha_M^s) \vartriangle A^*) = 1 - \mu_{n'}(L(\alpha_M^s)) \ge 1 - (1 - \frac{1}{|A|^{|s'|}}) = \frac{1}{|A|^{|s'|}}$ hold, where $n' \ge |s'|$.
          Hence, by $\mu_{n'}(L(\alpha_M^s) \vartriangle A^*) \neq 0$, $L(\alpha_M^s) \not\simeq_p A^*$.
        \end{proof}

        Thus, we can reduce the membership problem for nondeterministic linear-space bounded Turing machine to the $p$-equivalence problem for REGs.
        Therefore, the $p$-equivalence problem for REGs is PSPACE-hard.
    \end{enumerate}
  \end{proof}

  \begin{remark}
    The principal difference between this reduction and the reduction of \cite[Proposition 2.4]{hunt1976equivalence} is only (transition error).
    By this modification, $L(\alpha) \simeq_p A^* \iff L(\alpha) = A^*$ holds.
  \end{remark}

  The next theorem is obtained from Theorem \ref{theorem : in} and Theorem \ref{thm : hard}.
  \begin{theorem}\label{thm : complete}\leavevmode
    \begin{enumerate}
      \item The $p$-equivalence problem for DFAs is NL-complete.
      \item The $p$-equivalence problem for unary DFAs is in L.
      \item The $p$-equivalence problems for NFAs and REGs are PSPACE-complete.
      \item The $p$-equivalence problems for unary NFAs and unary REGs are coNP-complete.
    \end{enumerate}
  \end{theorem}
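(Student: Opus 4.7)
The plan is to observe that Theorem~\ref{thm : complete} is a direct consequence of combining the upper bounds in Theorem~\ref{theorem : in} with the lower bounds in Theorem~\ref{thm : hard}, together with standard (logarithmic-space) translations between the three representations REG, NFA, and DFA. No new algorithmic or combinatorial content is needed; the only non-trivial bookkeeping is matching each bound to the right representation.

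For item~1, NL-hardness is given by Theorem~\ref{thm : hard}.1 and NL-containment by Theorem~\ref{theorem : in}.1. For item~2, containment in L is immediately Theorem~\ref{theorem : in}.2 (no lower bound is claimed). For item~3, I would use Theorem~\ref{theorem : in}.3 for the NFA upper bound, and then observe that a REG can be translated to an equivalent NFA in logarithmic space (via Thompson's construction, as in \cite{stockmeyer1974complexity}), so the $p$-equivalence problem for REGs is also in PSPACE. Conversely, Theorem~\ref{thm : hard}.3 already supplies PSPACE-hardness for REGs, and the same logspace REG-to-NFA translation lifts this hardness to NFAs. For item~4, containment in coNP for unary NFAs is Theorem~\ref{theorem : in}.4, and a unary REG can again be converted in logarithmic space to a unary NFA, so the coNP upper bound carries over to unary REGs. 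Conversely, Theorem~\ref{thm : hard}.2 gives coNP-hardness for unary REGs, and the REG-to-NFA translation transfers this hardness to unary NFAs as well.

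The only place where one might worry is the handoff between representations for item~3 and item~4: one must check that the usual translations do not change the alphabet (so that ``unary'' is preserved in item~4) and that they are indeed computable in logarithmic space so that the reductions compose cleanly. Both are standard facts. Once these are noted, the four statements follow by assembling the previously proved pieces, so this ``proof'' is essentially a one-paragraph citation of Theorem~\ref{theorem : in} and Theorem~\ref{thm : hard} together with the representation-conversion remark.
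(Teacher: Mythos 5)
Your proposal is correct and follows essentially the same route as the paper: Theorem~\ref{thm : complete} is obtained by pairing the upper bounds of Theorem~\ref{theorem : in} with the lower bounds of Theorem~\ref{thm : hard}, using the standard REG-to-NFA translation (Thompson's construction) to carry the PSPACE and coNP bounds between representations. The only cosmetic difference is that the paper states the translation as polynomial-time rather than logarithmic-space, which is already sufficient for transferring PSPACE- and coNP-hardness.
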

  \begin{proof}
     We can transform any regular expression $\alpha$ into an NFA $\mathcal{A}_\alpha$ such that $L(\alpha) = L(\mathcal{A}_\alpha)$ in polynomial time (e.g., Thompson's construction \cite{Thompson:1968:PTR:363347.363387,sakarovitch2009elements}).
     For example, it is an easy consequence that the $p$-equivalence problem for REGs is in PSPACE by the construction and Theorem \ref{theorem : in}.
     It is also an easy consequence that the $p$-equivalence problem for NFAs is PSPACE-hard by the construction and Theorem \ref{thm : hard}.
  \end{proof}

  \subsection{Some generalized equivalence problems}
    We conclude this section with a result for some generalized equivalence problems.
    \begin{corollary}\label{corollary : gen-hard}
      Let $x$-equivalence problem be an equivalence problem satisfying that $\operatorname{=} \subseteq \operatorname{\simeq}_x \subseteq \operatorname{\simeq}_p$.
      Then, 
      \begin{enumerate}[(1)]
        \item The $x$-equivalence problems for REGs and NFAs are PSPACE-hard.
        \item The $x$-equivalence problem for DFAs is NL-hard.
        \item The $x$-equivalence problems for unary REGs and unary NFAs are coNP-hard.
      \end{enumerate}
    \end{corollary}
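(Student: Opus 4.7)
The plan is to observe that the three hardness reductions in the proof of Theorem~\ref{thm : hard} actually establish, on their constructed instances, the stronger biconditional $L_1 = L_2 \iff L_1 \simeq_p L_2$. Since $\operatorname{=} \subseteq \operatorname{\simeq}_x \subseteq \operatorname{\simeq}_p$, the implications $L_1 = L_2 \Rightarrow L_1 \simeq_x L_2$ and $L_1 \not\simeq_p L_2 \Rightarrow L_1 \not\simeq_x L_2$ both hold. Hence on the images of these reductions, $L_1 \simeq_x L_2 \iff L_1 = L_2 \iff L_1 \simeq_p L_2$, so the very same logarithmic-space reductions witness hardness for every $x$-equivalence sandwiched between $=$ and $\simeq_p$.

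Concretely, for (2) I would reuse the reduction $G \mapsto (\mathcal{A}_G, \mathcal{A}_\emptyset)$ from GAP in the proof of Theorem~\ref{thm : hard}.1: if $G \notin GAP$ then $L(\mathcal{A}_G) = \emptyset$, so $L(\mathcal{A}_G) \simeq_x \emptyset$ by $\operatorname{=} \subseteq \operatorname{\simeq}_x$; and if $G \in GAP$ then $L(\mathcal{A}_G) \not\simeq_p \emptyset$ was shown, so $L(\mathcal{A}_G) \not\simeq_x \emptyset$ by $\operatorname{\simeq}_x \subseteq \operatorname{\simeq}_p$. This gives coNL-hardness of the $x$-equivalence problem for DFAs, hence NL-hardness by NL = coNL.

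For (1) and (3), the same strategy applies verbatim: for (3), reuse the Meyer--Stockmeyer-style 3-SAT reduction from the proof of Theorem~\ref{thm : hard}.2, whose key lemma gives $L(E) = A^* \iff L(E) \simeq_p A^*$; and for (1), reuse the reduction $M, s \mapsto \alpha_M^s$ together with Lemma~\ref{lemma : regular expression and turing machine}, which already yields $L(\alpha_M^s) = A^* \iff L(\alpha_M^s) \simeq_p A^*$. In both cases the sandwich $\operatorname{=} \subseteq \operatorname{\simeq}_x \subseteq \operatorname{\simeq}_p$ immediately upgrades the biconditional to $L(\alpha) = A^* \iff L(\alpha) \simeq_x A^*$. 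This gives coNP-hardness for unary REGs and PSPACE-hardness for REGs. Finally, the NFA statements in (1) and (3) follow from their REG counterparts via Thompson's construction (as in the proof of Theorem~\ref{thm : complete}), since that transformation is in logarithmic space and preserves $L$, hence preserves $\simeq_x$.

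There is no real obstacle here: the only thing to verify is that the original three reductions produce instances where the $p$-equivalence question collapses to the equality question, but this is precisely what was already proved in Theorem~\ref{thm : hard}. The content of the corollary is the observation that these particular reductions were strong enough to be reused for every intermediate equivalence, giving a uniform robustness result.
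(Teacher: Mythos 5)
Your proposal is correct and follows essentially the same route as the paper: the paper likewise observes that on the instances produced by the reductions of Theorem~\ref{thm : hard}, equality and $p$-equivalence coincide, so the sandwich $\operatorname{=} \subseteq \operatorname{\simeq}_x \subseteq \operatorname{\simeq}_p$ collapses $\simeq_x$ onto both of them, and the same logarithmic-space reductions witness hardness for every such $x$-equivalence. Your explicit treatment of the NFA cases via Thompson's construction is a small additional detail the paper leaves implicit, but the core argument is identical.
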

    \begin{proof}
      We first show that $L(\alpha_M^s) \simeq_x A^* \iff L(\alpha_M^s) \simeq_p A^*$.
      \begin{description}
        \item[$(\Rightarrow )$]
          It is followed by that $\operatorname{\simeq}_x \subseteq \operatorname{\simeq}_p$.
        \item[$(\Leftarrow )$]
          By $L(\alpha_M^s) = A^* \iff L(\alpha_M^s) \simeq_p A^*$ (Lemma \ref{lemma : regular expression and turing machine}), $L(\alpha_M^s) = A^*$.
          Then, $L(\alpha) \simeq_x A^*$ is followed by $\operatorname{=} \subseteq \operatorname{\simeq}_x$.
      \end{description}
      Therefore, we can reduce the membership problem for nondeterministic linear-space bounded Turing machine to the $x$-equivalence problem for REGs by using the same reduction in Theorem \ref{thm : hard}.
      Hence, (1) is proved.

      (2) and (3) are also proved in the same way as (1).
      (2) is followed by that $L(\mathcal{A}_G) \not\simeq_p \emptyset \iff L(\mathcal{A}_G) \neq \emptyset$ is described in Theorem \ref{thm : hard}.
      (3) is followed by that $L(E) \simeq_p A^* \iff L(E) = A^*$ is described in Theorem \ref{thm : hard}.
    \end{proof}

    Moreover, the next corollary is obtained from Corollary \ref{cor : in general} and Corollary \ref{corollary : gen-hard}
    \begin{corollary}\label{corollary : complete}
      Let $x$-equivalence problem be an equivalence problem satisfying that 
      \begin{inparaenum}[(1)]
        \item the $x$-equivalence problem for DFAs is logarithmic space reducible 
          to the $\Phi_x$-model-checking problem; and
        \item $\operatorname{=} \subseteq \operatorname{\simeq}_x \subseteq \operatorname{\simeq}_p$.
      \end{inparaenum}
      Then, 
      \begin{enumerate}[(1)]
        \item The $x$-equivalence problems for REGs and NFAs are PSPACE-complete.
        \item The $x$-equivalence problem for DFAs is NL-complete.
      \end{enumerate}
    \end{corollary}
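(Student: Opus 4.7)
The plan is to observe that this corollary is a direct packaging of the two previously proved generalizations: Corollary \ref{cor : in general} supplies the upper bounds and Corollary \ref{corollary : gen-hard} supplies the matching hardness results. Condition (1) of the hypothesis is precisely what was needed to invoke Corollary \ref{cor : in general}, and condition (2) is precisely what was needed to invoke Corollary \ref{corollary : gen-hard}. So I would organize the proof in exactly the same way Theorem \ref{thm : complete} was obtained from Theorem \ref{theorem : in} and Theorem \ref{thm : hard}, but now with $x$ in place of $p$.

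For part (1), I would first cite Corollary \ref{corollary : gen-hard}(1) to get PSPACE-hardness for REGs and NFAs, and Corollary \ref{cor : in general}(3) to get the PSPACE upper bound for NFAs. To lift the NFA upper bound to REGs, I would invoke Thompson's construction \cite{Thompson:1968:PTR:363347.363387,sakarovitch2009elements}: given regular expressions $\alpha_1,\alpha_2$, one constructs NFAs $\mathcal{A}_{\alpha_1},\mathcal{A}_{\alpha_2}$ with $L(\alpha_i) = L(\mathcal{A}_{\alpha_i})$ in polynomial time, and then runs the PSPACE procedure for NFAs; since PSPACE is closed under polynomial-time reductions, this gives REG $x$-equivalence in PSPACE. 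Combined with hardness, both problems are PSPACE-complete.

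For part (2), NL-hardness for DFAs is Corollary \ref{corollary : gen-hard}(2), and the NL upper bound is Corollary \ref{cor : in general}(1). Together these give NL-completeness.

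There is no real mathematical obstacle here — the work has already been done in the two prior corollaries, and the only step one must be explicit about is that the reduction from REG to NFA via Thompson's construction runs in polynomial time (and hence is within PSPACE), mirroring the exact argument already used in the proof of Theorem \ref{thm : complete}. In the write-up I would keep the proof very short, essentially just citing the two corollaries and adding one sentence about the REG-to-NFA conversion for the REG case of part (1).
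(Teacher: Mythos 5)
Your proposal is correct and takes essentially the same route as the paper: the paper simply states that Corollary \ref{corollary : complete} is obtained by combining Corollary \ref{cor : in general} (upper bounds) with Corollary \ref{corollary : gen-hard} (hardness), and your only added detail --- the polynomial-time REG-to-NFA conversion needed because Corollary \ref{cor : in general} does not cover REGs --- is exactly the step the paper already used in proving Theorem \ref{thm : complete}. No gaps.
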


    For example, $f$-equivalence and $E$-equivalence satisfy the condition of $x$-equivalence, where $E$ is a finite set.
    Hence, for any finite set $E$, the $E$-equivalence problem for NFAs \cite{DBLP:conf/dlt/HolzerJ12} is also PSPACE-complete, whereas $E$ is fixed.
  \section{The computational complexities of zero-one law} \label{section : zero-one law}
    We define the \emph{zero-one problem} as the problem to decide whether a given language $L$ obeys zero-one law \cite{DBLP:journals/corr/Sinya15a} (i.e., $\mu (L) = 0$ or $\mu (L) = 1$).
    (In terms of time complexity, the zero-one problem for DFA is $O(|A| n)$ \cite{DBLP:journals/corr/Sinya15a}, where $|A|$ is the size of alphabet and $n$ is the number of states.)
    
    In this section, we show that the zero-one problem and the $p$-equivalence problem are the same in terms of the computational complexities.
    \begin{corollary}\label{corollary : zero-one}\leavevmode
      \begin{enumerate}
        \item The zero-one problem for REG and NFA are PSPACE-complete.
        \item The zero-one problem for DFA is NL-complete.
        \item The zero-one problem for unary REG and unary NFA are coNP-complete.
        \item The zero-one problem for unary DFA is in L.
      \end{enumerate}
    \end{corollary}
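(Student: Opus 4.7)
The plan is to exploit the identity that a language $L$ obeys the zero-one law if and only if $L \simeq_p \emptyset$ or $L \simeq_p A^*$, since $\mu(L \vartriangle \emptyset) = \mu(L)$ and $\mu(L \vartriangle A^*) = 1 - \mu(L)$ whenever the limits exist. This reduces the zero-one problem to the disjunction of two $p$-equivalence instances, and both $\emptyset$ and $A^*$ admit logarithmic-space constructions in all representations under consideration (a single-state DFA, a one-state NFA, or the regex $0$ and $(a_1 \cup \dots \cup a_k)^*$ respectively).

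For the upper bounds, this reformulation combined with the closure of $\mathrm{NL}$, $\mathrm{L}$, $\mathrm{PSPACE}$, and $\mathrm{coNP}$ under disjunction immediately yields the claimed bounds from Theorem~\ref{theorem : in} and Theorem~\ref{thm : complete} (closure of $\mathrm{coNP}$ under disjunction follows from closure of $\mathrm{NP}$ under intersection).

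For $\mathrm{PSPACE}$-hardness (REG, NFA) and $\mathrm{coNP}$-hardness (unary REG, unary NFA), I would reuse the reductions of Theorem~\ref{thm : hard} verbatim and argue that they already respect the zero-one law. In the $\mathrm{PSPACE}$ case, when $M$ rejects $s$ we have $L(\alpha_M^s) = A^*$, so $\mu = 1$ and zero-one holds; when $M$ accepts $s$, the subexpression $\alpha_1 \subseteq \alpha_M^s$ (input error) already gives $\mu(L(\alpha_M^s)) \ge 1 - |A|^{-(n+1)} > 0$, while the witness $s'$ extracted inside the proof of Lemma~\ref{lemma : regular expression and turing machine} satisfies $s' A^* \cap L(\alpha_M^s) = \emptyset$ and so forces $\mu(L(\alpha_M^s)) \le 1 - |A|^{-|s'|} < 1$; hence $\mu \in (0,1)$ and zero-one fails. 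For the unary $\mathrm{coNP}$ reduction, when the 3-SAT instance is satisfiable the language $L(E)$ is an eventually periodic unary language whose density $\mu_n$ takes both values $0$ and $1$ along its period, so the limit does not exist and zero-one is violated; when unsatisfiable, $L(E) = A^*$ and zero-one trivially holds.

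The main obstacle is $\mathrm{NL}$-hardness for DFAs, because the raw reduction from $\mathrm{GAP}$ in Theorem~\ref{thm : hard} does not suffice: for dense graphs $G$ the language $L(\mathcal{A}_G)$ may well have $\mu = 1$ and so obey zero-one even when $G \in \mathrm{GAP}$. I would patch the construction by replacing the absorbing accepting sink $n$ with a transient ``gate'' that feeds into two new absorbing states, $+1$ (accepting) and $-1$ (rejecting), sending one symbol out of $n$ to $+1$ and the remaining symbols to $-1$. Then $G \notin \mathrm{GAP}$ gives $L = \emptyset$ and zero-one holds with $\mu = 0$, while $G \in \mathrm{GAP}$ guarantees eventual absorption with strictly positive mass in both $+1$ (since $n$ is reachable and one symbol absorbs to $+1$) and $-1$ (since from $n$ at least one symbol absorbs to $-1$, using $n \ge 2$), so $\mu(L) \in (0,1)$ and zero-one fails. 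Combined with $\mathrm{NL} = \mathrm{coNL}$, this yields $\mathrm{NL}$-hardness for the zero-one problem on DFAs.
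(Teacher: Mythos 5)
Your proposal is correct and follows essentially the same route as the paper: upper bounds via the reformulation $L$ obeys zero-one iff $L \simeq_p \emptyset$ or $L \simeq_p A^*$ together with closure of the relevant classes under union, and hardness by reusing the reductions of Theorem~\ref{thm : hard} after verifying that one of the two $p$-equivalences is automatically excluded. The only substantive difference is the DFA gadget: you correctly observe that the raw GAP reduction fails for the zero-one problem (the absorbing accepting sink can capture all the asymptotic mass) and repair it by splitting the sink into accepting and rejecting absorbing states, whereas the paper instead adds a fresh letter $e$ sending every non-sink state to a rejecting dead state; both patches are valid and serve the same purpose.
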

    \begin{proof}
      First, each zero-one problem can be solved by two $p$-equivalence problems as $L \simeq_p \emptyset \lor  L \simeq_p A^*$.
      Therefore, the zero-one problems are not harder than $p$-equivalence problems.
      For example, if $p$-equivalence problem for REGs is in PSPACE, then zero-one problem for REG is also in PSPACE.

      It is also proved that the computational hardness of the zero-one problems are given in the almost same way as the computational hardness for the $p$-equivalence problems as follows.
      \begin{description}
        \item[REG and NFA] In Theorem \ref{thm : hard}, for any regular expression $\alpha_M^s$ constructed from $M$ and $s$,
        $L(\alpha_M^s) \not \simeq_p \emptyset$ is easily followed by that $L(\#\# A^*) \subseteq L(\alpha_M^s)$.
        Therefore, $L(\alpha_M^s)$ has zero-one law $\iff$ $L(\alpha_M^s) \simeq_p A^*$.
        \item[DFA] In Theorem \ref{thm : hard}, we intentionally create a path to $0$ by a new character $e$.
        More precisely, we define the DFA $\mathcal{A}_G = (\{0,1,\dots,n\},\{e,1,\dots,n\},\delta,1,\{n\})$, where
        \begin{inparaenum}[(1)]
          \item if $i \neq n$, then $\delta(i,e) = 0$;
          \item if $i = n$, then $\delta(i,e) = n$; and
          \item otherwise, $\delta(i,j)$ is the same as $\delta(i,j)$ in Theorem \ref{thm : hard}.
        \end{inparaenum}
        Then, $L(\mathcal{A}_G) \not\simeq_p A^*$ is easily followed by that, for any string $s \in L(e A^*)$, $s \not\in L(\mathcal{A}_G)$.
        Therefore, $L(\mathcal{A}_G)$ has zero-one law $\iff$ $L(\mathcal{A}_G) \simeq_p \emptyset$.
        \item[unary REG and unary NFA] We can use the reduction in \cite[Theorem 6.1]{meyer1973word}.
        In \cite[Theorem 6.1.]{meyer1973word}, $E$ is always an infinite set.
        Therefore, $L(E) \not \simeq_f \emptyset$.
        By Lemma \ref{lemma : unary NFA}, $L(E) \not \simeq_p \emptyset$.
        Hence, $E$ has zero-one law $\iff$ $L(E) \simeq_p A^*$.
      \end{description}
    \end{proof}
\section{Conclusion and Future Work} \label{section : conclusion}
  We have got the following results (Table \ref{table : result}).
  In regular languages, the $p$-equivalence problems and the (fully) equivalence problems are the same in terms of the computational complexities.
  Moreover, we have got the same complexity computational results for some generalized equivalence problems.

  One of the possible future works is to study about $p$-equivalence for more complex language classes (e.g., context free languages).
  In connection with almost-equivalence, it is also interesting to characterize hyper-minimization based on $p$-equivalence like \cite[Theorem 3.4]{ITA:8238099}.
  \begin{table}[h]
    \begin{tabular}{|l||c|c|c|c|c|c|}
      \hline & \multicolumn{3}{|c|}{unary alphabet ($|A| = 1$)}  & \multicolumn{3}{|c|}{general case}\\
      \hline & REG & DFA & NFA & REG & DFA & NFA\\
      \hline\hline equivalence & coNP-c & in L & coNP-c & PSPACE-c & NL-c & PSPACE-c\\
      & \cite{meyer1973word} & \cite{jones1975space} & \cite{meyer1973word} & \cite{meyer1973word} & \cite{jones1975space} & \cite{meyer1973word}\\
      \hline $p$-equivalence & coNP-c& in L & coNP-c & PSPACE-c & NL-c & PSPACE-c\\
      & (Th.\ref{thm : complete}) & (Th.\ref{theorem : in}) & (Th.\ref{thm : complete}) & (Th.\ref{thm : complete}) & (Th.\ref{thm : complete}) & (Th.\ref{thm : complete})\\
      \hline zero-one & coNP-c & in L & coNP-c & PSPACE-c & NL-c & PSPACE-c\\
      & (Cor.\ref{corollary : zero-one}) & (Cor.\ref{corollary : zero-one}) & (Cor.\ref{corollary : zero-one}) & (Cor.\ref{corollary : zero-one}) & (Cor.\ref{corollary : zero-one}) & (Cor.\ref{corollary : zero-one})\\
      \hline
    \end{tabular}
    \caption{The computational complexities of some problems for regular languages}
    \label{table : result}
  \end{table}
  \vspace{-5ex}
\section{Acknowledgements}
  I would like to thank Ryoma Sin'ya for suggesting holding Proposition \ref{prop : equiv} and for giving some beneficial comments.
  This work was supported by JSPS KAKENHI Grant Number 16J08119.

\bibliographystyle{eptcs}
\bibliography{gandalf2016}

\begin{thebibliography}{10}
\providecommand{\bibitemdeclare}[2]{}
\providecommand{\surnamestart}{}
\providecommand{\surnameend}{}
\providecommand{\urlprefix}{Available at }
\providecommand{\url}[1]{\texttt{#1}}
\providecommand{\href}[2]{\texttt{#2}}
\providecommand{\urlalt}[2]{\href{#1}{#2}}
\providecommand{\doi}[1]{doi:\urlalt{http://dx.doi.org/#1}{#1}}
\providecommand{\bibinfo}[2]{#2}

\bibitemdeclare{proceedings}{BKM01}
\bibitem{BKM01}
\bibinfo{editor}{Roland \surnamestart Backhouse\surnameend},
  \bibinfo{editor}{Dexter \surnamestart Kozen\surnameend} \&
  \bibinfo{editor}{Bernhard \surnamestart M{\"o}ller\surnameend}, editors
  (\bibinfo{year}{2001}): \emph{\bibinfo{title}{Applications of Kleene
  Algebra}}. \bibinfo{volume}{01081},
  \bibinfo{organization}{Dagstuhl-Seminar-Report}.
\newblock \urlprefix\url{https://www.dagstuhl.de/Reports/01/01081.pdf}.

\bibitemdeclare{inproceedings}{Badr:2008:HO:1428728.1428753}
\bibitem{Badr:2008:HO:1428728.1428753}
\bibinfo{author}{Andrew \surnamestart Badr\surnameend} (\bibinfo{year}{2008}):
  \emph{\bibinfo{title}{Hyper-Minimization in {O}($n^2$)}}.
\newblock In: {\sl \bibinfo{booktitle}{Proceedings of the 13th International
  Conference on Implementation and Applications of Automata}},
  \bibinfo{series}{CIAA '08}, \bibinfo{publisher}{Springer-Verlag},
  \bibinfo{address}{Berlin, Heidelberg}, pp. \bibinfo{pages}{223--231},
  \doi{10.1007/978-3-540-70844-5_23}.

\bibitemdeclare{article}{ITA:8238099}
\bibitem{ITA:8238099}
\bibinfo{author}{Andrew \surnamestart Badr\surnameend}, \bibinfo{author}{Viliam
  \surnamestart Geffert\surnameend} \& \bibinfo{author}{Ian \surnamestart
  Shipman\surnameend} (\bibinfo{year}{2009}):
  \emph{\bibinfo{title}{Hyper-minimizing minimized deterministic finite state
  automata}}.
\newblock {\sl \bibinfo{journal}{RAIRO - Theoretical Informatics and
  Applications}} \bibinfo{volume}{43}, pp. \bibinfo{pages}{69--94},
  \doi{10.1051/ita:2007061}.

\bibitemdeclare{inproceedings}{berstel1973densite}
\bibitem{berstel1973densite}
\bibinfo{author}{Jean \surnamestart Berstel\surnameend} (\bibinfo{year}{1973}):
  \emph{\bibinfo{title}{Sur la densit{\'e} asymptotique de langages formels}}.
\newblock In: {\sl \bibinfo{booktitle}{International Colloquium on Automata,
  Languages and Programming (ICALP, 1972)}},
  \bibinfo{organization}{North-Holland}, pp. \bibinfo{pages}{345--358}.

\bibitemdeclare{book}{berstel2010codes}
\bibitem{berstel2010codes}
\bibinfo{author}{Jean \surnamestart Berstel\surnameend},
  \bibinfo{author}{Dominique \surnamestart Perrin\surnameend} \&
  \bibinfo{author}{Christophe \surnamestart Reutenauer\surnameend}
  (\bibinfo{year}{2010}): \emph{\bibinfo{title}{Codes and automata}}.
\newblock \bibinfo{volume}{129}, \bibinfo{publisher}{Cambridge University
  Press}.
\newblock
  \urlprefix\url{http://citeseerx.ist.psu.edu/viewdoc/summary?doi=10.1.1.107.9934}.

\bibitemdeclare{article}{buchi1960weak}
\bibitem{buchi1960weak}
\bibinfo{author}{J~Richard \surnamestart B{\"u}chi\surnameend}
  (\bibinfo{year}{1960}): \emph{\bibinfo{title}{Weak Second-Order Arithmetic
  and Finite Automata}}.
\newblock {\sl \bibinfo{journal}{Mathematical Logic Quarterly}}
  \bibinfo{volume}{6}(\bibinfo{number}{1-6}), pp. \bibinfo{pages}{66--92},
  \doi{10.1002/malq.19600060105}.

\bibitemdeclare{inproceedings}{cook1971complexity}
\bibitem{cook1971complexity}
\bibinfo{author}{Stephen~A \surnamestart Cook\surnameend}
  (\bibinfo{year}{1971}): \emph{\bibinfo{title}{The complexity of
  theorem-proving procedures}}.
\newblock In: {\sl \bibinfo{booktitle}{Proceedings of the third annual ACM
  symposium on Theory of computing}}, \bibinfo{organization}{ACM}, pp.
  \bibinfo{pages}{151--158}, \doi{10.1145/800157.805047}.

\bibitemdeclare{inproceedings}{DBLP:conf/dlt/HolzerJ12}
\bibitem{DBLP:conf/dlt/HolzerJ12}
\bibinfo{author}{Markus \surnamestart Holzer\surnameend} \&
  \bibinfo{author}{Sebastian \surnamestart Jakobi\surnameend}
  (\bibinfo{year}{2012}): \emph{\bibinfo{title}{From Equivalence to
  Almost-Equivalence, and Beyond - Minimizing Automata with Errors - (Extended
  Abstract)}}.
\newblock In: {\sl \bibinfo{booktitle}{Developments in Language Theory - 16th
  International Conference, {DLT} 2012, Taipei, Taiwan, August 14-17, 2012.
  Proceedings}}, pp. \bibinfo{pages}{190--201},
  \doi{10.1007/978-3-642-31653-1_18}.

\bibitemdeclare{article}{hunt1976equivalence}
\bibitem{hunt1976equivalence}
\bibinfo{author}{Harry~B \surnamestart Hunt\surnameend},
  \bibinfo{author}{Daniel~J \surnamestart Rosenkrantz\surnameend} \&
  \bibinfo{author}{Thomas~G \surnamestart Szymanski\surnameend}
  (\bibinfo{year}{1976}): \emph{\bibinfo{title}{On the equivalence,
  containment, and covering problems for the regular and context-free
  languages}}.
\newblock {\sl \bibinfo{journal}{Journal of Computer and System Sciences}}
  \bibinfo{volume}{12}(\bibinfo{number}{2}), pp. \bibinfo{pages}{222--268},
  \doi{10.1016/S0022-0000(76)80038-4}.

\bibitemdeclare{article}{immerman1988nondeterministic}
\bibitem{immerman1988nondeterministic}
\bibinfo{author}{Neil \surnamestart Immerman\surnameend}
  (\bibinfo{year}{1988}): \emph{\bibinfo{title}{Nondeterministic space is
  closed under complementation}}.
\newblock {\sl \bibinfo{journal}{SIAM Journal on computing}}
  \bibinfo{volume}{17}(\bibinfo{number}{5}), pp. \bibinfo{pages}{935--938},
  \doi{10.1137/0217058}.

\bibitemdeclare{book}{immerman2012descriptive}
\bibitem{immerman2012descriptive}
\bibinfo{author}{Neil \surnamestart Immerman\surnameend}
  (\bibinfo{year}{2012}): \emph{\bibinfo{title}{Descriptive complexity}}.
\newblock \bibinfo{publisher}{Springer Science \& Business Media},
  \doi{10.1007/978-1-4612-0539-5}.

\bibitemdeclare{article}{jones1975space}
\bibitem{jones1975space}
\bibinfo{author}{Neil~D \surnamestart Jones\surnameend} (\bibinfo{year}{1975}):
  \emph{\bibinfo{title}{Space-bounded reducibility among combinatorial
  problems}}.
\newblock {\sl \bibinfo{journal}{Journal of Computer and System Sciences}}
  \bibinfo{volume}{11}(\bibinfo{number}{1}), pp. \bibinfo{pages}{68--85},
  \doi{10.1016/S0022-0000(75)80050-X}.

\bibitemdeclare{book}{libkin2013elements}
\bibitem{libkin2013elements}
\bibinfo{author}{Leonid \surnamestart Libkin\surnameend}
  (\bibinfo{year}{2004}): \emph{\bibinfo{title}{Elements of finite model
  theory}}.
\newblock \bibinfo{publisher}{Springer Science \& Business Media},
  \doi{10.1007/978-3-662-07003-1}.

\bibitemdeclare{article}{lynch1993convergence}
\bibitem{lynch1993convergence}
\bibinfo{author}{James~F \surnamestart Lynch\surnameend}
  (\bibinfo{year}{1993}): \emph{\bibinfo{title}{Convergence laws for random
  words}}.
\newblock {\sl \bibinfo{journal}{Australasian Journal of Combinatorics}}
  \bibinfo{volume}{7}, pp. \bibinfo{pages}{145--156}.
\newblock
  \urlprefix\url{http://citeseerx.ist.psu.edu/viewdoc/summary?doi=10.1.1.401.5051}.

\bibitemdeclare{inproceedings}{meyer1973word}
\bibitem{meyer1973word}
\bibinfo{author}{AR~\surnamestart Meyer\surnameend} \&
  \bibinfo{author}{LJ~\surnamestart Stockmeyer\surnameend}
  (\bibinfo{year}{1973}): \emph{\bibinfo{title}{Word problems requiring
  exponential time}}.
\newblock In: {\sl \bibinfo{booktitle}{Proc. STOC}}, \bibinfo{volume}{73}, pp.
  \bibinfo{pages}{1--9}, \doi{10.1145/800125.804029}.

\bibitemdeclare{book}{muresan2015concrete}
\bibitem{muresan2015concrete}
\bibinfo{author}{M.~\surnamestart Muresan\surnameend} (\bibinfo{year}{2009}):
  \emph{\bibinfo{title}{A Concrete Approach to Classical Analysis}}.
\newblock \bibinfo{series}{CMS Books in Mathematics},
  \bibinfo{publisher}{Springer New York}, \doi{10.1007/978-0-387-78933-0}.

\bibitemdeclare{article}{pin2010mathematical}
\bibitem{pin2010mathematical}
\bibinfo{author}{Jean-{\'E}ric \surnamestart Pin\surnameend}
  (\bibinfo{year}{2010}): \emph{\bibinfo{title}{Mathematical foundations of
  automata theory}}.
\newblock {\sl \bibinfo{journal}{Lecture notes LIAFA, Universit{\'e} Paris}}
  \bibinfo{volume}{7}.
\newblock
  \urlprefix\url{http://citeseerx.ist.psu.edu/viewdoc/summary?doi=10.1.1.375.1193}.

\bibitemdeclare{book}{sakarovitch2009elements}
\bibitem{sakarovitch2009elements}
\bibinfo{author}{Jacques \surnamestart Sakarovitch\surnameend}
  (\bibinfo{year}{2009}): \emph{\bibinfo{title}{Elements of automata theory}}.
\newblock \bibinfo{publisher}{Cambridge University Press},
  \doi{10.1017/CBO9781139195218}.

\bibitemdeclare{book}{salomaa2012automata}
\bibitem{salomaa2012automata}
\bibinfo{author}{Arto \surnamestart Salomaa\surnameend} \&
  \bibinfo{author}{Matti \surnamestart Soittola\surnameend}
  (\bibinfo{year}{1978}): \emph{\bibinfo{title}{Automata-theoretic aspects of
  formal power series}}.
\newblock \bibinfo{publisher}{Springer Science \& Business Media},
  \doi{10.1007/978-1-4612-6264-0}.

\bibitemdeclare{inproceedings}{DBLP:journals/corr/Sinya15a}
\bibitem{DBLP:journals/corr/Sinya15a}
\bibinfo{author}{Ryoma \surnamestart Sin'ya\surnameend} (\bibinfo{year}{2015}):
  \emph{\bibinfo{title}{An Automata Theoretic Approach to the Zero-One Law for
  Regular Languages: Algorithmic and Logical Aspects}}.
\newblock In: {\sl \bibinfo{booktitle}{Proceedings Sixth International
  Symposium on Games, Automata, Logics and Formal Verification, GandALF 2015,
  Genoa, Italy, 21-22nd September 2015.}}, pp. \bibinfo{pages}{172--185},
  \doi{10.4204/EPTCS.193.13}.

\bibitemdeclare{phdthesis}{sinya-phd16}
\bibitem{sinya-phd16}
\bibinfo{author}{Ryoma \surnamestart Sin'ya\surnameend} (\bibinfo{year}{2016}):
  \emph{\bibinfo{title}{Zero-One Law for Regular Languages}}.
\newblock \bibinfo{type}{{Ph.}{D.} {T}hesis}, \bibinfo{school}{Tokyo Insutitute
  of Technology, Japan}.
\newblock
  \urlprefix\url{http://t2r2.star.titech.ac.jp/rrws/file/CTT100701584/ATD100000413/}.

\bibitemdeclare{article}{stockmeyer1974complexity}
\bibitem{stockmeyer1974complexity}
\bibinfo{author}{Larry~Joseph \surnamestart Stockmeyer\surnameend}
  (\bibinfo{year}{1974}): \emph{\bibinfo{title}{The complexity of decision
  problems in automata theory and logic}}.
\newblock \urlprefix\url{https://dspace.mit.edu/handle/1721.1/15540}.

\bibitemdeclare{article}{szelepcsenyi1988method}
\bibitem{szelepcsenyi1988method}
\bibinfo{author}{R{\'o}bert \surnamestart Szelepcs{\'e}nyi\surnameend}
  (\bibinfo{year}{1988}): \emph{\bibinfo{title}{The method of forced
  enumeration for nondeterministic automata}}.
\newblock {\sl \bibinfo{journal}{Acta Informatica}}
  \bibinfo{volume}{26}(\bibinfo{number}{3}), pp. \bibinfo{pages}{279--284},
  \doi{10.1007/BF00299636}.

\bibitemdeclare{inproceedings}{szilard1992characterizing}
\bibitem{szilard1992characterizing}
\bibinfo{author}{Andrew \surnamestart Szilard\surnameend},
  \bibinfo{author}{Sheng \surnamestart Yu\surnameend},
  \bibinfo{author}{Kaizhong \surnamestart Zhang\surnameend} \&
  \bibinfo{author}{Jeffrey \surnamestart Shallit\surnameend}
  (\bibinfo{year}{1992}): \emph{\bibinfo{title}{Characterizing regular
  languages with polynomial densities}}.
\newblock In: {\sl \bibinfo{booktitle}{International Symposium on Mathematical
  Foundations of Computer Science}}, \bibinfo{organization}{Springer}, pp.
  \bibinfo{pages}{494--503}, \doi{10.1007/3-540-55808-X_48}.

\bibitemdeclare{article}{Thompson:1968:PTR:363347.363387}
\bibitem{Thompson:1968:PTR:363347.363387}
\bibinfo{author}{Ken \surnamestart Thompson\surnameend} (\bibinfo{year}{1968}):
  \emph{\bibinfo{title}{Programming Techniques: Regular Expression Search
  Algorithm}}.
\newblock {\sl \bibinfo{journal}{Commun. ACM}}
  \bibinfo{volume}{11}(\bibinfo{number}{6}), pp. \bibinfo{pages}{419--422},
  \doi{10.1145/363347.363387}.

\bibitemdeclare{article}{trakhtenbrot1950impossibility}
\bibitem{trakhtenbrot1950impossibility}
\bibinfo{author}{Boris~A \surnamestart Trakhtenbrot\surnameend}
  (\bibinfo{year}{1950}): \emph{\bibinfo{title}{Impossibility of an algorithm
  for the decision problem on finite classes (in Russian)}}.
\newblock {\sl \bibinfo{journal}{Doklady Akademii Nauk SSSR}}
  \bibinfo{volume}{70}, pp. \bibinfo{pages}{569--572}.

\end{thebibliography}
\end{document}